\def\ve#1{{\mathchoice{\mbox{\boldmath$\displaystyle #1$}}%
		{\mbox{\boldmath$\textstyle #1$}}%
		{\mbox{\boldmath$\scriptstyle #1$}}%
		{\mbox{\boldmath$\scriptscriptstyle #1$}}}}
\pgfplotsset{
compat=1.17,
mystyle/.style={
    scale only axis,
    width=0.7\columnwidth,
    height=0.5\columnwidth,
    label style={inner sep=0, font=\normalsize},
    tick label style={font=\scriptsize},
    legend style={font=\scriptsize},
    mark size=3,
    major grid style={dashed},
    line width=0.8pt,
    axis line style = thin}
}
\newcommand{\MeKap}{Metzner--Kapturowski}
\newcommand{\Fqm}{\ensuremath{\mathbb F_{q^m}}}
\newcommand{\Fq}{\ensuremath{\mathbb F_{q}}}
\newcommand{\F}{\ensuremath{\mathbb F}}
\newcommand{\NN}{\ensuremath{\mathbb{N}}}
\newcommand{\ZZ}{\ensuremath{\mathbb{Z}}}
\newcommand{\set}[1]{\ensuremath{\mathcal{#1}}}
\newcommand{\Polyring}{\ensuremath{\Fqm[x]}}
\newcommand{\aut}{\ensuremath{\theta}}
\newcommand{\der}{\ensuremath{\delta}}
\newcommand{\SkewPolyring}{\ensuremath{\Fqm[x;\aut,\der]}}
\newcommand{\SkewPolyringZeroDer}{\ensuremath{\Fqm[x;\aut]}}
\newcommand{\opev}[3]{\ensuremath{{#1}(#2)_{#3}}}
\newcommand{\op}[2]{\ensuremath{\mathcal{D}_{#1}(#2)}}
\newcommand{\opexp}[3]{\ensuremath{\mathcal{D}_{#1}^{#3}(#2)}}
\newcommand{\OCompl}[1]{\ensuremath{\mathcal{O}({#1})}}
\newcommand{\OComplTilde}[1]{\ensuremath{\tilde{\mathcal{O}}({#1})}}
\newcommand{\ldiv}{\textsf{leftDivide}}
\newcommand{\keyeq}{\textsf{solveKeyEquation}}
\newcommand{\keyeqapprox}{\textsf{solveKEviaMAB}}
\newcommand{\lapproxbasis}{\textsf{LeftSkewPMBasis}}
\newcommand{\defeq}{:=}
\newcommand{\modr}{\; \mathrm{mod}_\mathrm{r} \;}
\DeclareMathOperator{\Id}{\textrm{Id}}
\DeclareMathOperator{\wt}{wt}
\DeclareMathOperator{\rk}{rk}
\DeclareMathOperator{\diag}{diag}
\DeclareMathOperator{\PR}{Pr}
\newcommand{\failProb}{\ensuremath{\PR_{\textup{fail}}}}
\renewcommand{\vec}[1]{\ve{#1}} 
\newcommand{\mat}[1]{\ensuremath{\bm{#1}}}
\newcommand{\opMoore}[3]{\ensuremath{\mathfrak{M}_{#1}(#2)_{#3}}}
\newcommand{\opVandermonde}[3]{\ensuremath{\mathfrak{m}_{#1}(#2)_{#3}}}
\newcommand{\genNorm}[2]{\ensuremath{\mathcal{N}_{#1}\left(#2\right)}}
\newcommand{\lclm}{\ensuremath{\mathrm{lclm}}}
\renewcommand{\a}{\vec{a}}
\renewcommand{\b}{\vec{b}}
\renewcommand{\c}{\vec{c}}
\newcommand{\e}{\vec{e}}
\newcommand{\f}{\vec{f}}
\renewcommand{\t}{\vec{t}}
\renewcommand{\v}{\vec{v}}
\newcommand{\w}{\vec{w}}
\newcommand{\x}{\vec{x}}
\newcommand{\y}{\vec{y}}
\newcommand{\A}{\mat{A}}
\newcommand{\B}{\mat{B}}
\newcommand{\G}{\mat{G}}
\newcommand{\I}{\mat{I}}
\renewcommand{\H}{\mat{H}}
\renewcommand{\L}{\mat{L}}
\newcommand{\R}{\mat{R}}
\newcommand{\U}{\mat{U}}
\newcommand{\W}{\mat{W}}
\newcommand{\X}{\mat{X}}
\newcommand{\0}{\ensuremath{\mathbf 0}}
\newcommand{\vecbeta}{\ensuremath{\boldsymbol{\beta}}}
\newcommand{\vecxi}{\ensuremath{\boldsymbol{\xi}}}
\newcommand{\vecchi}{\ensuremath{\boldsymbol{\chi}}}
\newcommand{\vecrho}{\ensuremath{\boldsymbol{\rho}}}
\newcommand{\linRS}[4]{\ensuremath{\mathrm{LRS}[#1, #2; #3, #4]}} 
\newcommand{\horIntLinRS}[5]{\ensuremath{\mathrm{HI}\mathrm{LRS}[#1, #2, #3; #4, #5]}} 
\newcommand{\SumRankWeight}{\ensuremath{\wt_{\Sigma R}}}
\newcommand{\SumRankWeightWithN}[1]{\ensuremath{\wt_{\Sigma R, #1}}}
\newcommand{\SumRankDist}{d_{\ensuremath{\Sigma}R}}
\newcommand{\SumRankDistWithN}{d_{\ensuremath{\Sigma}R, \n}}
\newcommand{\RowspaceFqm}[1]{\ensuremath{{\left\langle #1 \right\rangle}_{q^m}}}
\newcommand{\OMul}[1]{\mathcal{M}(#1)}
\newcommand{\mycolorbox}[2]{\colorbox{#1!20!white}{\rule{0pt}{.4\baselineskip} \ensuremath{\scriptsize #2}}}
\newcommand{\myfbox}[1]{\fbox{\ensuremath{#1}}}
\newcommand{\shot}[2]{\ensuremath{{#1}^{(#2)}}}
\newcommand{\subShot}[3]{\ensuremath{{#1}^{(#3)}}_{#2}}
\newcommand{\len}{\ensuremath{n}}
\newcommand{\lenShot}[1]{\ensuremath{\len_{#1}}}
\newcommand{\degConstraint}{\ensuremath{D}}
\newcommand{\intOrder}{\ensuremath{s}}
\newcommand{\shots}{\ensuremath{\ell}}
\DeclareMathOperator{\rdeg}{rdeg}
\DeclareMathOperator{\mpol}{mpol}
\DeclareMathOperator{\intpol}{intpol}
\newcommand{\mpolArgs}[2]{\ensuremath{\mpol_{(#1)_{#2}}}}
\newcommand{\intpolArgs}[3]{\ensuremath{\intpol^{\phantom{(}#3}_{{(#1)_{#2}}}}}
\newcommand{\ESP}{\ensuremath{\sigma}}
\newcommand{\h}{\vec{h}}
\newcommand{\M}{\ve{M}}
\newcommand{\tmax}{t_\mathsf{max}}
\renewcommand{\b}{\vec{b}}
\newcommand{\bmin}{\b_\mathsf{min}}
\newcommand{\n}{\vec{n}}
\title{Fast Gao-like Decoding of Horizontally Interleaved Linearized Reed--Solomon Codes\thanks{F. Hörmann and H. Bartz acknowledge the financial support by the Federal Ministry of Education and Research of Germany in the programme of ``Souverän. Digital. Vernetzt.'' Joint project 6G-RIC, project identification number: 16KISK022.}}
\titlerunning{Fast Gao-like Decoding of HILRS Codes}
\author{Felicitas Hörmann\inst{1,2} \orcidlink{0000-0003-2217-9753} \and
Hannes Bartz\inst{1} \orcidlink{0000-0001-7767-1513}}
\authorrunning{F. Hörmann and H. Bartz}
\institute{Institute of Communications and Navigation, German Aerospace Center (DLR), Oberpfaffenhofen--Wessling, Germany\\
\email{\{felicitas.hoermann, hannes.bartz\}@dlr.de}
\and
School of Computer Science, University of St. Gallen, St. Gallen, Switzerland}
\begin{document}

\maketitle

\begin{abstract}
	Both horizontal interleaving as well as the sum-rank metric are currently attractive topics in the field of code-based cryptography, as they could mitigate the problem of large key sizes.
	In contrast to vertical interleaving, where codewords are stacked vertically, each codeword of a horizontally $\intOrder$-interleaved code is the horizontal concatenation of $s$ codewords of $\intOrder$ component codes.
	In the case of \ac{HILRS} codes, these component codes are chosen to be \ac{LRS} codes.

	We provide a Gao-like decoder for \ac{HILRS} codes that is inspired by the respective works for non-interleaved \acl{RS} and Gabidulin codes.
	By applying techniques from the theory of minimal approximant bases, we achieve a complexity of $\OComplTilde{\intOrder^{2.373} n^{1.635}}$ operations in $\Fqm$, where $\OComplTilde{\cdot}$ neglects logarithmic factors, $\intOrder$ is the interleaving order and $n$ denotes the length of the component codes.
	For reasonably small interleaving order $\intOrder \ll n$, this is subquadratic in the component-code length $n$ and improves over the only known syndrome-based decoder for \ac{HILRS} codes with quadratic complexity.
	Moreover, it closes the performance gap to vertically interleaved \ac{LRS} codes for which a decoder of complexity $\OComplTilde{\intOrder^{2.373} n^{1.635}}$ is already known.

	We can decode beyond the unique-decoding radius and handle errors of sum-rank weight up to $\frac{\intOrder}{\intOrder + 1} (n - k)$ for component-code dimension $k$.
	We also give an upper bound on the failure probability in the zero-derivation setting and validate its tightness via Monte Carlo simulations.

	\keywords{Gao-like Decoding \and Horizontal Interleaving \and Linearized Reed--Solomon Codes \and Sum-Rank Metric \and Code-Based Cryptography \and Minimal Approximant Bases}
\end{abstract}

\setcounter{tocdepth}{3}

\acresetall

\section{Introduction}

The American \ac{NIST} started a competition for \ac{PQC} in 2016.
After three rounds, the lattice-based \ac{KEM} CRYSTALS-Kyber~\cite{crystals-kyber-round3} was standardized in July 2022~\cite{nist-round3}.
Moreover, \ac{NIST} announced a fourth round to which four \ac{KEM} candidates advanced: \acs{BIKE}~\cite{bike-round4}, Classic McEliece~\cite{classic-mceliece-round4}, \acs{HQC}~\cite{hqc-round4}, and \acs{SIKE}~\cite{sike-round4}.
\acs{SIKE} is the only candidate based on hard problems in the area of isogenies and was broken by~\cite{castryck2022sike-attack} shortly after \ac{NIST}'s round-4 announcement.
The remaining three candidates in this round rely on coding-theoretical problems in the Hamming metric.

In his seminal paper~\cite{McEliece-1978} in 1978, McEliece proposed the first code-based cryptosystem, which still serves as a blueprint for most of the recent proposals.
The McEliece framework essentially resisted the cryptanalytic effort of 45 years.
However, it suffers from large key sizes and is thus not usable in many practical applications.

\paragraph{Rank and Sum-Rank Metric}
As the \acl{SDP} in the rank metric is harder than its Hamming-metric counterpart~\cite{Aragon2018-DecAttack,bardet2021rsl}, many McEliece-like schemes based on rank-metric codes as e.g.~\cite{gabidulin1991ideals,gabidulin2009improving,loidreau2010designing,Loidreau-GPT-ACCT2016} were considered.
Unfortunately, most of them were broken by structural attacks.
A new approach is to consider the sum-rank metric which covers both the Hamming and the rank metric as special cases.
Even though the gain in terms of key size might not be as large as for the rank metric, it is reasonable to hope that rank-metric attacks cannot be adapted to the sum-rank-metric case~\cite{Hoermann_Bartz_et-al-2023} and the corresponding systems will remain secure.

\paragraph{Interleaved Codes}
Another way to reduce the key size is to use codes with higher error-correction capability.
An increased error weight will result in higher complexities for generic attacks like~\cite{puchinger2020generic} and thus require smaller parameter sizes to achieve the same level of security.
One well-known code construction to improve the (burst) error-correction capability is interleaving, where each codeword of the $\intOrder$-interleaved code consists of $\intOrder$ vertically or horizontally stacked codewords of $\intOrder$ component codes, respectively.

Metzner and Kapturowski~\cite{metznerkapturowski1990} showed that vertically interleaved Hamming-metric codes can be efficiently decoded with negligible failure probability as soon as their interleaving order $\intOrder$ is high compared to the error weight $t$.
This result was generalized to the rank metric~\cite{metznerkapturowskirank2021isit,puchinger2019decoding} and recently also to the sum-rank metric~\cite{jerkovits2023metzner}.
As no knowledge about the code structure is needed for \MeKap-like decoders, this is a direct generic attack on any code-based cryptosystems based on vertically interleaved codes with high interleaving order.
Thus, horizontal interleaving appears to be better suited for cryptographic purposes.
This is also reflected in recent proposals as for example in the \ac{KEM} LowMS~\cite{aragon2022lowms} that is based on horizontally interleaved Gabidulin codes, in the signature scheme Durandal~\cite{aragon2019durandal} based on the closely related \ac{RSL} problem~\cite{bardet2021rsl}, and in the cryptosystem~\cite{aguilar2022lrpcMultipleSyndromes} that makes use of horizontally interleaved \ac{LRPC} codes~\cite{renner2019efficient}.

The cryptanalysis of the underlying hard problems ensures reliable security-level estimates.
However, also performance improvements for decoding horizontally interleaved codes have a significant impact as they directly speed up decryption and verification within the corresponding cryptosystems and digital signatures.

\paragraph{\acs{HILRS} Codes}
\Ac{HILRS} codes combine the usage of an alternative decoding metric for higher generic-decoding complexity and the interleaving construction for higher error-correction capability.
Both approaches promise to reduce the key size in a McEliece-like setup.
The component codes of an \ac{HILRS} code are \ac{LRS} codes which were introduced by Mart{\'\i}nez-Pe{\~n}as in 2018~\cite{martinez2018skew}.
Up to now, \ac{LRS} codes are one of the most studied code families in the sum-rank metric.
They are evaluation codes with respect to skew polynomials and form the natural generalization of \ac{RS} codes in the Hamming metric and Gabidulin codes in the rank metric.

As the performance of code-based cryptosystems strongly depends on the decoding speed for the underlying codes, fast decoders for \ac{HILRS} codes are crucial.
Currently, the only known decoder for \ac{HILRS} codes is syndrome-based and has a quadratic complexity in the length $sn$ of the interleaved code (ongoing work~\cite{hoermann2023errorerasure} extending~\cite{hoermann2022errorErasureISIT}).
It can handle a combination of errors, row erasures, and column erasures.

In contrast, \ac{VILRS} codes, which are constructed by vertically stacking $\intOrder$ \ac{LRS} codewords, allow for decoding with lower complexity $\tilde{\mathcal{O}}(\intOrder^{\omega} \mathcal{M}(n)) \subseteq \tilde{\mathcal{O}}(\intOrder^{2.373} n^{1.635})$~\cite{bartz2021decoding,bartz2023fast}.
Here, $\omega$ and $\mathcal{M}(n)$ denote the matrix-multiplication coefficient and the cost of multiplying two skew polynomials of degree at most $n$, respectively, and $\tilde{\mathcal{O}}(\cdot)$ neglects logarithmic factors.

\paragraph{Contributions}
This paper presents a Gao-like decoder for \ac{HILRS} codes.
It is based on the original Gao decoder for Reed--Solomon codes in the Hamming metric~\cite{gao2003new} as well as on its known extensions to Gabidulin codes~\cite{wachter2013fast,wachter2013phd} and their horizontally interleaved version~\cite{puchinger2017row} in the rank metric.
We consider probabilistic unique decoding beyond the unique-decoding radius and derive an upper bound on the decoding-failure probability in the zero-derivation case.
We achieve a decoding radius of $\frac{\intOrder}{\intOrder + 1} (n - k)$ for the interleaving order $\intOrder$ and for $n$ and $k$ denoting the length and the dimension of the component codes, respectively.

We further show how a major speedup can be obtained by using the theory of minimal approximant bases~\cite{bartz2021orderBases}.
The fast variant of the Gao-like decoder achieves subquadratic complexity in the length $n$ of the component codes for a fixed interleaving order $s$.
Particularly, we obtain $\tilde{\mathcal{O}}(\intOrder^{\omega} \mathcal{M}(n)) \subseteq \tilde{\mathcal{O}}(\intOrder^{2.373} n^{1.635})$ and thus close the performance gap with respect to the decoding of \ac{VILRS} codes.

Our conceptually new approach to solving the Gao-like key equation results in the fastest known decoder for \ac{HILRS} codes in the sum-rank metric.
Moreover, the special case obtained for the rank metric yields the fastest decoder for horizontally interleaved Gabidulin codes in the rank metric, improving on~\cite{sidorenko2010decoding,sidorenko2011skew,puchinger2017row}.

\paragraph{Outline}
We start the paper in~\autoref{sec:prelims} by giving basic preliminaries on skew polynomials, on \ac{HILRS} codes in the sum-rank metric, and on the channel model we consider.
Then, we present a Gao-like decoder for \ac{HILRS} codes in~\autoref{sec:gao-dec} and analyze its decoding radius, complexity, and failure probability.
\autoref{sec:gao-mab} deals with a speedup for the shown decoder that is based on the theory of minimal approximant bases.
Finally, we summarize the main results of the paper in~\autoref{sec:conclusion} and give an outlook on future work.

\section{Preliminaries}
\label{sec:prelims}

We denote the finite field of order $q$ by $\Fq$ and refer to its degree-$m$ extension field by $\Fqm$.
We often consider vectors $\x \in \Fqm^{n}$ that are divided into blocks.
More precisely, we define a \emph{length partition} of $n \in \NN^{\ast}$ as the vector $\n = (n_1, \dots, n_{\shots}) \in \NN^{\shots}$ with $\sum_{i = 1}^{\shots} n_i = n$ and $n_i > 0$ for all $i = 1, \dots, \shots$.
We write $\x = ( \shot{\x}{1} \mid \dots \mid \shot{\x}{\shots} )$, where the blocks $\shot{\x}{i}$ belong to $\Fqm^{n_i}$ for all $i = 1, \dots, \shots$.
Similarly, we write $\X = ( \shot{\X}{1} \mid \dots \mid \shot{\X}{\shots})$ for a subdivided matrix $\X \in \Fqm^{k \times n}$ with $\shot{\X}{i} \in \Fqm^{k \times n_i}$ for all $i = 1, \dots, \shots$.
The $\Fqm$-linear row space of $\X$ is denoted by $\RowspaceFqm{\X}$.

Further choose an $\Fqm$-automorphism $\aut$ with fixed field $\Fq$.
Note that $\aut$ is $\Fq$-linear and satisfies both $\aut(a + b) = \aut(a) + \aut(b)$ and $\aut(a \cdot b) = \aut(a) \cdot \aut(b)$ for arbitrary $a, b \in \Fqm$.
Moreover, we consider a map $\der: \Fqm \to \Fqm$ for which the equalities $\der(a + b) = \der(a) + \der(b)$ and $\der(ab) = \der(a)b + \aut(a)\der(b)$ hold for all $a, b \in \Fqm$.
In the finite-field setting, all such $\aut$-derivations $\der$ are inner derivations~\cite{martinez2018skew}, i.e., they have the form $\der = \gamma (\Id - \aut)$ for a parameter $\gamma \in \Fqm$ and the identity $\Id$.

The automorphism $\aut$ and the derivation $\der$ give rise to a partition of $\Fqm$ with respect to $(\aut, \der)$-conjugacy~\cite{lam1988vandermonde}.
Namely, two elements $a, b \in \Fqm$ are conjugate if there is a nonzero $c \in \Fqm^{\ast}$ with
\begin{equation}
	a^c \defeq \aut(c) a c^{-1} + \der(c) c^{-1}.
\end{equation}
The conjugacy class of an element $a \in \Fqm$ is denoted by $\set{C}(a) \defeq \left\{a^c : c \in \Fqm^{\ast} \right\}$ and $\set{C}(0)$ is called the trivial conjugacy class.
There are $q - 1$ distinct nontrivial $(\aut, \der)$-conjugacy classes.
In the zero-derivation case, each of the first $q - 1$ powers of any primitive element of $\Fqm$ belongs to another nontrivial class.

\subsection{Skew-Polynomial Rings}

Skew polynomials were first studied by Ore in 1933~\cite{Ore1933,ore1933theory} and are used e.g. for the construction of \ac{LRS} codes~\cite{martinez2018skew}.
The skew-polynomial ring $\SkewPolyring$ contains all formal polynomials $\sum_{i} f_{i} x^{i-1}$ with finitely many nonzero coefficients $f_{i} \in \Fqm$.
The notion of the degree $\deg(f) \defeq \max\{i - 1 : f_i \neq 0\}$ of a skew polynomial $f(x) = \sum_{i} f_{i} x^{i-1}$ carries over from $\Polyring$.
The set of skew polynomials forms a non-commutative ring with respect to conventional polynomial addition and a multiplication that is determined by the non-commutative rule $x a = \aut(a) x + \der(a)$ for any $a \in \Fqm$.
By $\SkewPolyring_{<k}$ we denote the subset of $\SkewPolyring$ containing all skew polynomials of degree less than $k$.
For simplicity, we refer to the skew-polynomial ring with zero derivation by $\SkewPolyringZeroDer \defeq \Fqm[x;\aut,0]$.

$\SkewPolyring$ is Euclidean which ensures the existence of skew polynomials $q, r \in \SkewPolyring$ with $f(x) = q(x) g(x) + r(x)$ and $\deg(r) < \deg(g)$ for each pair $f, g \in \SkewPolyring$ with $\deg(f) \geq \deg(g)$.
We denote the remainder $r$ of this right-hand division by $f \modr g$.
\\

The literature provides two meaningful ways to evaluate skew polynomials, namely, the remainder evaluation~\cite{lam1988vandermonde} and the generalized operator evaluation~\cite{martinez2018skew}.
The former corresponds to the idea of enforcing a remainder theorem similar to the one in conventional polynomial rings and will not be of interest for this paper.
The latter is e.g.\ used for the construction of \ac{LRS} codes that we heavily rely on.
For defining the generalized operator evaluation of skew polynomials we first introduce the operator $\op{a}{b} \defeq \aut(b) a + \der(b)$ and its $i$-th power $\opexp{a}{b}{i} \defeq \op{a}{\opexp{a}{b}{i-1}}$ for $i \in \NN^{\ast}$ and any $a, b \in \Fqm$.
The operator simplifies to $\op{a}{b} = \aut(b) a$ for all $a, b \in \Fqm$ in the case of zero derivation.
In this case, its $i$-th power $\opexp{a}{b}{i}$ for $i \in \NN^{\ast}$ can be written as $\opexp{a}{b}{i} = \aut^i(b) \cdot \genNorm{i}{a}$, where $\genNorm{i}{a} \defeq \prod_{k = 0}^{i - 1} \aut^{k}(a)$ is the $i$-th truncated norm of $a$.

The \emph{generalized operator evaluation} of a skew polynomial $f(x) = \sum_{i = 1}^{d} f_{i} x^{i-1} \allowbreak \in \SkewPolyring$ at a point $b \in \Fqm$ and with respect to an evaluation parameter $a \in \Fqm$ is defined as
\begin{equation}
    \opev{f}{b}{a} \defeq \sum_{i = 1}^{d} f_{i} \opexp{a}{b}{i-1}.
\end{equation}
We use the notation $\opev{f}{\b}{a} \defeq ( \opev{f}{b_1}{a}, \dots, \opev{f}{b_{n}}{a} )$ to denote the vector containing the evaluations of $f$ at every entry of $\b \in \Fqm^{n}$.
Moreover, if $\b = (\shot{\b}{1} \mid \dots \mid \shot{\b}{\shots}) \in \Fqm^{n}$ is subdivided according to a length partition $\n$ and $\a = (a_1, \dots, a_{\shots}) \in \Fqm^{\shots}$, we use the shorthand $\opev{f}{\b}{\a} \defeq ( \opev{f}{\shot{\b}{1}}{a_1}, \dots, \opev{f}{\shot{\b}{\shots}}{a_{\shots}} )$ to evaluate $f$ at the elements of the $i$-th block $\shot{\b}{i}$ with respect to the evaluation parameter $a_i$ for every $i = 1, \dots, \shots$.

The evaluation of a product of two skew polynomials $f,g \in \SkewPolyring$ satisfies the product rule
$\opev{(f\cdot g)}{b}{a} = \opev{f}{\opev{g}{b}{a}}{a}$ for all $a,b \in \Fqm$~\cite{lam1988vandermonde}.

For a vector $\x = \left(\x^{(1)} \mid \dots \mid \x^{(\shots)}\right) \in \Fqm^{n}$, a vector $\a \in \Fqm^{\shots}$, and a parameter $d \in \NN^{\ast}$ the \emph{generalized Moore matrix} $\opMoore{d}{\x}{\a}$ is defined as
\begin{align}\label{eq:def_gen_moore_mat}
    \opMoore{d}{\x}{\a} &\defeq
    \left( \opVandermonde{d}{\x^{(1)}}{a_1} \mid \dots \mid \opVandermonde{d}{\x^{(\shots)}}{a_\shots} \right)
    \in \Fqm^{d \times n},
    \\
    \text{with }
    \opVandermonde{d}{\x^{(i)}}{a_i} &\defeq
    \begin{pmatrix}
        x^{(i)}_1 & \cdots & x^{(i)}_{\lenShot{i}}
        \\
        \op{a_i}{x^{(i)}_1} & \cdots & \op{a_i}{x^{(i)}_{\lenShot{i}}}
        \\
        \vdots & \ddots & \vdots
        \\
        \opexp{a_i}{x^{(i)}_1}{d-1} & \cdots & \opexp{a_i}{x^{(i)}_{\lenShot{i}}}{d-1}
    \end{pmatrix}
    \quad \text{for all } i = 1, \dots, \shots.
\end{align}
If $\a$ contains representatives of pairwise distinct nontrivial conjugacy classes of $\Fqm$ and $\rk_{q}\left(\x^{(i)}\right) =
\lenShot{i}$ for all $i = 1, \dots, \shots$, it holds
$\rk_{q^m}\left(\opMoore{d}{\x}{\a}\right) = \min(d, \len)$~\cite{lam1988vandermonde,martinez2018skew}.

Consider $\b = (\shot{\b}{1} \mid \dots \mid \shot{\b}{\shots}) \in \Fqm^{n}$ and $\a = (a_1, \dots, a_{\shots}) \in \Fqm^{\shots}$.
The minimal skew polynomial that vanishes on the entries of $\shot{\b}{i}$ with respect to the evaluation parameter $a_i$ for each $i = 1, \dots, \shots$ is denoted by
$\mpolArgs{\b}{\a}(x)$ and characterized by
\begin{equation}
  \opev{\mpolArgs{\b}{\a}}{\shot{\b}{i}}{a_i} = \0 \quad \text{for all } i=1,\dots,\shots.
\end{equation}
According to~\cite{Boucher-2020}, it can be computed as a \ac{lclm} via
\begin{equation}
	\label{eq:mpol-lclm}
  	\mpolArgs{\b}{\a}(x)
	= \lclm\left\{ x-\frac{\op{a_i}{\subShot{b}{\iota}{i}}}{\subShot{b}{\iota}{i}} ~:~\subShot{b}{\iota}{i}\neq 0,~~
	\begin{aligned}
		\iota &= 1, \dots, n_i,\\
		i &= 1, \dots, \shots
		\end{aligned}
	\right\}.
\end{equation}
The degree satisfies $\deg(\mpolArgs{\b}{\a})\leq n$ with equality if and only if the entries of $\shot{\b}{i}$ are $\Fq$-linearly independent for all $i = 1, \dots, \shots$ and the evaluation parameters $a_1, \dots, a_{\shots}$ belong to distinct nontrivial conjugacy classes of $\Fqm$.

Now consider an additional vector $\c = (\shot{\c}{1} \mid \dots \mid \shot{\c}{\shots}) \in \Fqm^n$.
Then there exists a unique skew interpolation polynomial $\intpolArgs{\b}{\a}{\c}(x) \in \SkewPolyring$ with $\deg(\intpolArgs{\b}{\a}{\c}) < n$ and
\begin{equation}
	\opev{\intpolArgs{\b}{\a}{\c}}{\shot{\b}{i}}{a_i} = \shot{\c}{i} \quad \text{for all } i=1,\dots,\shots\text{~\cite{caruso2019residues}}.
\end{equation}

For the complexity analysis of the Gao-like decoder, we will use $\OCompl{\cdot}$ to state asymptotic costs in terms of the usual big-O notation.
Moreover, the notation $\OComplTilde{\cdot}$ indicates that logarithmic factors in the input parameter are neglected.
The complexity of skew-polynomial operations in the zero-derivation setting was summarized in~\cite[Section II.D.]{bartz2021orderBases}.
Particularly, left and right division of skew polynomials with degree at most $n$ as well as the computation of a minimal or an interpolation polynomial of degree at most $n$ can be achieved in $\OComplTilde{\mathcal{M}_{q, m}(n)}$ operations in $\Fqm$.
Here, $\mathcal{M}_{q, m}(n)$ denotes the cost of multiplying two skew polynomials of degree $n$ from $\SkewPolyringZeroDer$ and it holds $\OCompl{\mathcal{M}_{q, m}(n)} \subseteq \OCompl{n^{\min(\frac{\omega + 1}{2}, 1.635)}} \subseteq \OCompl{n^{1.635}}$.
The exponent $\omega \geq 2$ denotes the matrix-multiplication coefficient for which the currently best known upper bound is $\omega < 2.3728639$~\cite{le2014powers}.

\subsection{The Sum-Rank Metric and the Corresponding Interleaved Channel Model}

The \emph{sum-rank weight} of a vector $\x = ( \shot{\x}{1} \mid \dots \mid \shot{\x}{\shots} )\in \Fqm^{n}$ with respect to the length partition $\n$ is
\begin{equation}
	\SumRankWeightWithN{\n}(\x) = \sum_{i=1}^{\shots} \rk_{q} \big( \shot{\x}{i} \big)
\end{equation}
where $\rk_{q} \big( \shot{\x}{i} \big)$ is the maximum number of $\Fq$-linearly independent entries of the block $\shot{\x}{i}$ for each $i = 1, \dots, \shots$.
The \emph{sum-rank metric} is induced by the sum-rank weight via $\SumRankDistWithN(\x, \y) = \SumRankWeightWithN{\n}(\x - \y)$ for all vectors $\x,\y \in \Fqm^{n}$.
Note that we omit the index $\n$ and simply write $\SumRankWeight$ and $\SumRankDist$ when the length partition is clear from the context.

The sum-rank metric coincides with the Hamming metric for $\shots = n$, i.e., when every block has length one, and with the rank metric for $\shots = 1$, i.e., when the vector is considered as a single block.
\\

Let now $\x = (\x_1 \mid \dots \mid \x_{\intOrder}) \in \Fqm^{\intOrder n}$ with $\x_j \in \Fqm^n$ for all $j = 1, \dots, \intOrder$ be a horizontally $\intOrder$-interleaved vector for an interleaving order $s \in \NN^{\ast}$.
Let us further assume for simplicity that all component vectors $\x_j = (\subShot{\x}{j}{1} \mid \dots \mid \subShot{\x}{j}{\shots}) \in \Fqm^{n}$ for $j = 1, \dots, \intOrder$ are equipped with the same length partition $\n$.
The natural way to define the sum-rank weight of $\x \in \Fqm^{\intOrder n}$ is with respect to the \emph{block-ordered} length partition $\tilde{\n} = (\intOrder n_{1}, \dots, \intOrder n_{\shots})$, i.e., as
\begin{equation}
	\SumRankWeightWithN{\tilde{\n}}(\x) \defeq \sum_{i=1}^{\shots} \rk_q(\shot{\x}{i}) \qquad \text{for } \shot{\x}{i} = (\subShot{\x}{1}{i} \mid \dots \mid \subShot{\x}{\intOrder}{i}).
\end{equation}
As for the conventional sum-rank metric, we often omit the length partition in the index and simply write $\SumRankWeight(\x)$ when $\tilde{\n}$ is clear from the context.
\autoref{fig:interleaved-sum-rank-weight} illustrates how the sum-rank weight of horizontally interleaved vectors is computed by grouping the same-indexed blocks of the component vectors.
It shows how the block-ordered length partition arises naturally in this setting.

\begin{figure}[ht]
    \begin{align}
		\resizebox{.82\textwidth}{!}{
		$\x =
		\Big(
			\underbrace{\myfbox{\mycolorbox{blue}{\subShot{\x}{1}{1}} ~ \mycolorbox{red}{\subShot{\x}{1}{2}} \cdots \mycolorbox{green}{\subShot{\x}{1}{\shots}}}}_{\x_1 \in \Fqm^{n}}~
			\underbrace{\myfbox{\mycolorbox{blue}{\subShot{\x}{2}{1}} ~ \mycolorbox{red}{\subShot{\x}{2}{2}} \cdots \mycolorbox{green}{\subShot{\x}{2}{\shots}}}}_{\x_2 \in \Fqm^{n}}~
			\cdots~
			\underbrace{\myfbox{\mycolorbox{blue}{\subShot{\x}{\intOrder}{1}} ~ \mycolorbox{red}{\subShot{\x}{\intOrder}{2}} \cdots \mycolorbox{green}{\subShot{\x}{\intOrder}{\shots}}}}_{\x_{\intOrder} \in \Fqm^{n}}
		\Big) \in \Fqm^{\intOrder n}$
		}
		\\
		\resizebox{\textwidth}{!}{
		$\SumRankWeight(\x) =
		\rk_q \Big( \myfbox{\mycolorbox{blue}{\subShot{\x}{1}{1}} ~ \mycolorbox{blue}{\subShot{\x}{2}{1}} \cdots \mycolorbox{blue}{\subShot{\x}{\intOrder}{1}}} \Big)
		+ \rk_q \Big( \myfbox{\mycolorbox{red}{\subShot{\x}{1}{2}} ~ \mycolorbox{red}{\subShot{\x}{2}{2}} \cdots \mycolorbox{red}{\subShot{\x}{\intOrder}{2}}} \Big)
		+ \ldots
		+ \rk_q \Big( \myfbox{\mycolorbox{green}{\subShot{\x}{1}{\shots}} ~ \mycolorbox{green}{\subShot{\x}{2}{\shots}} \cdots \mycolorbox{green}{\subShot{\x}{\intOrder}{\shots}}} \Big)$
		}
	\end{align}
	\vspace{-.5cm}
	\caption{Illustration of the sum-rank weight for a horizontally $\intOrder$-interleaved vector $\x = (\x_1 \mid \dots \mid \x_{\intOrder}) \in \Fqm^{\intOrder n}$.}
	\label{fig:interleaved-sum-rank-weight}
\end{figure}

We now consider the transmission of an interleaved vector $\x \in \Fqm^{\intOrder n}$ over a sum-rank error channel with output
\begin{equation}
	\label{eq:sum_rank_channel_hor_int}
	\y = \x + \e
\end{equation}
where the error vector $\e$ is understood as a horizontally $\intOrder$-interleaved vector $\e = (\e_1 \mid \dots \mid \e_{\intOrder}) \in \Fqm^{\intOrder n}$ of sum-rank weight $\SumRankWeightWithN{\tilde{\n}}(\e) = t$.
We further assume a uniform channel distribution, that is, that the error $\e$ is drawn uniformly at random from the set
\begin{equation}
	\label{eq:e-set-random}
    \{ \x = (\x_1 \mid \dots \mid \x_{\intOrder}) \in \Fqm^{\intOrder n} : \SumRankWeightWithN{\tilde{\n}}(\x) = t \}.
\end{equation}
The described channel is illustrated in~\autoref{fig:channel}.

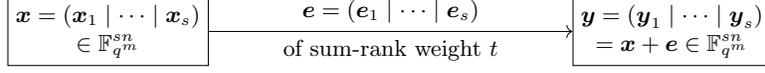
\begin{figure}[ht!]
    \begin{center}
		\resizebox{.85\textwidth}{!}{
		\begin{tikzpicture}
			\node [draw, align=center] (codeword) {$\x = (\x_1 \mid \dots \mid \x_{\intOrder})$\\ $\in \Fqm^{\intOrder n}$};
			\node [draw, align=center, node distance=5cm, right=of codeword] (received) {$\y = (\y_1 \mid \dots \mid \y_{\intOrder})$\\ $= \x + \e \in \Fqm^{\intOrder n}$};
			\draw [->] (codeword) -- node [above] {$\e = (\e_1 \mid \dots \mid \e_{\intOrder})$} node [below] {of sum-rank weight $t$} (received);
		\end{tikzpicture}
		}
	\end{center}
	\caption{The additive sum-rank channel for horizontally interleaved vectors.}
	\label{fig:channel}
\end{figure}

Let $\t = (t_1, \dots, t_{\shots}) \in \NN^{\shots}$ with $t_i = \rk_q(\shot{\e}{i}) \defeq \rk_q(\subShot{\e}{1}{i} \mid \dots \mid \subShot{\e}{\intOrder}{i})$ for all $i = 1, \dots, \shots$ denote the rank partition of $\e$.
Then, we obtain for each $i = 1, \dots, \shots$ a decomposition of the form
    $(\e_1^{(i)} \mid \dots \mid \e_\intOrder^{(i)}) = \shot{\a}{i} \cdot \left( \shot{\B}{i}_{1} \mid \dots \mid \shot{\B}{i}_{\intOrder} \right)$,
where $\a^{(i)}\in\Fqm^{t_i}$ with $\rk_q(\a^{(i)})=t_i$ and $\B_j^{(i)}\in \Fq^{t_i \times n_i}$ with $\rk_q\left(\B_1^{(i)} \mid \dots \mid \B_\intOrder^{(i)}\right)=t_i$ for all $j = 1, \dots, \intOrder$.
After reordering the components, the error vector $\e$ can thus be decomposed as
\begin{equation}\label{eq:err_int_vec_decomp}
    \e
    =\a \cdot \B
\end{equation}
with $\a=(\shot{\a}{1} \mid \dots \mid \shot{\a}{\shots}) \in \Fqm^t$ and
\begin{equation}\label{eq:def_B_matrix}
    \B =
    \left(
    \begin{array}{ccc|c|ccc}
     \B_1^{(1)} & & & & \B_\intOrder^{(1)} &
     \\
     & \ddots & & \dots & & \ddots &
     \\
    & & \B_1^{(\shots)} & &  & & \B_\intOrder^{(\shots)}
    \end{array}
    \right) \in \Fq^{t \times \intOrder n},
\end{equation}
where for any $i = 1, \dots, \shots$ and any $j = 1, \dots, \intOrder$
\begin{align*}
    &\a^{(i)}\in\Fqm^{t_i} \text{ with } \rk_q(\a^{(i)})=t_i
    \\
    \text{and }
    &\B_j^{(i)}\in \Fq^{t_i \times n_i} \text{ with } \rk_q\left(\B_1^{(i)} \mid \dots \mid \B_\intOrder^{(i)}\right)=t_i.
\end{align*}
Note that the decomposition in~\eqref{eq:err_int_vec_decomp} is not unique.
Moreover, the uniform distribution of $\e$ among all vectors of sum-rank weight $t$ implies that, for fixed rank partition $\t$, both $\a$ and $\B$ are also chosen uniformly at random from the sets
\begin{gather}
    \{ \x \in \Fqm^{t} : \SumRankWeightWithN{\t}(\x) = t \} \label{eq:a-set-random}
	\\
	\text{and }\quad \{ \X \in \Fqm{t \times \intOrder n} \text{ of the form}~\eqref{eq:def_B_matrix} : \SumRankWeightWithN{\tilde{\n}}(\X) = t \},
	\label{eq:B-set-random}
\end{gather}
respectively.

The elements in $\shot{\a}{i}$ form a basis of the column space of $\shot{\e}{i}$ and are called \emph{error values}.
Similarly, the rows of $\subShot{\B}{j}{i}$ form a basis of the row space of $\subShot{\e}{j}{i}$ and are referred to as \emph{error locations}.
For horizontal interleaving, the error values in $\a$ are common for all component errors.

\subsection{Horizontally Interleaved Linearized Reed--Solomon (HILRS) Codes}

We first introduce \ac{LRS} codes~\cite[Definition~31]{martinez2018skew}, which are one of the most prominent families of sum-rank-metric codes.

\begin{definition}[{\Acl{LRS} Codes}]
	\label{def:LRS_codes}
	Let $\vecxi = (\xi_1, \dots, \xi_{\shots}) \in \Fqm^{\shots}$ contain elements of distinct nontrivial conjugacy classes of $\Fqm$.
	Further denote by $\n = (n_1, \dots, n_{\shots}) \in \NN^{\shots}$ a length partition of $n$, i.e., $n = \sum_{i=1}^{\shots} n_i$.
	Let the vectors $\shot{\vecbeta}{i} = (\subShot{\beta}{1}{i}, \dots, \subShot{\beta}{n_i}{i}) \in \Fqm^{n_i}$ contain $\Fq$-linearly independent $\Fqm$-elements for all $i = 1, \dots, \shots$ and write $\vecbeta = \left(\shot{\vecbeta}{1} \mid \dots \mid \shot{\vecbeta}{\shots} \right) \in \Fqm^n$.
	A \acf{LRS} code of length $n$ and dimension $k$ is defined as
	\begin{equation}
		\linRS{\vecbeta}{\vecxi}{\n}{k}
		=\left\{
		\left(
		\opev{f}{\shot{\vecbeta}{1}}{\xi_1} \mid \dots \mid \opev{f}{\shot{\vecbeta}{\shots}}{\xi_\shots}
		\right) : f \in \SkewPolyring_{<k} \right\} \subseteq \Fqm^{n}.
	\end{equation}
\end{definition}
Every codeword $\c \in \linRS{\vecbeta}{\vecxi}{\n}{k}$ corresponds to a skew polynomial $f \in \allowbreak\SkewPolyring_{< k}$.
We sometimes write $\c = \c(f)$ to emphasize this and call $f$ the \emph{message polynomial} of $\c$.

The minimum distance $d$ of an \ac{LRS} code satisfies the Singleton-like bound $d \leq n - k + 1$ with equality.
Thus, \ac{LRS} codes are~\ac{MSRD} codes.

Similar to \ac{RS} and Gabidulin codes, \ac{LRS} codes have a generator matrix $\G$ of a particularly useful form.
Namely, the matrix $\G = (\shot{\G}{1} \mid \dots \mid \shot{\G}{\shots}) = \opMoore{k}{\vecbeta}{\vecxi} \in \Fqm^{k \times n}$ with
\begin{equation}
	\shot{\G}{i} = \opVandermonde{k}{\shot{\vecbeta}{i}}{\xi_i} =
	\begin{pmatrix}
		\subShot{\beta}{1}{i} & \dots & \subShot{\beta}{n_i}{i}
		\\
		\op{\xi_i}{\subShot{\beta}{1}{i}} & \dots & \op{\xi_i}{\subShot{\beta}{n_i}{i}}
		\\
		\vdots & \ddots & \vdots
		\\
		\opexp{\xi_i}{\subShot{\beta}{1}{i}}{k-1} & \dots & \opexp{\xi_i}{\subShot{\beta}{n_i}{i}}{k-1}
	\end{pmatrix}
	\in \Fqm^{k\times n_i}
\end{equation}
for all $i = 1, \dots, \shots$ generates the code $\linRS{\vecbeta}{\vecxi}{\n}{k}$.
\\

We obtain an \ac{HILRS} code with interleaving order $\intOrder \in \NN^{\ast}$ by combining $\intOrder$ \ac{LRS} component codes.
Namely, each codeword of the \ac{HILRS} code is the horizontal concatenation of $s$ codewords of the chosen component codes.

\begin{definition}[Horizontally Interleaved LRS Codes]
	\label{def:HILRS_code}
	Fix an interleaving order $\intOrder \in \NN^{\ast}$ and pick for each $j = 1, \dots, \shots$ an \ac{LRS} code $\linRS{\vecbeta_j}{\vecxi}{\n}{k}$ according to~\autoref{def:LRS_codes}.
	We define the \acf{HILRS} code with interleaving order $\intOrder$, code locators $\vecbeta \defeq (\vecbeta_1 \mid \dots \mid \vecbeta_{\intOrder})$, evaluation parameters $\vecxi$, and length partition $\intOrder \n \defeq (\intOrder n_1, \dots, \intOrder n_\shots)$ as
	\begin{equation}
		\horIntLinRS{\vecbeta}{\vecxi}{\intOrder}{\intOrder\n}{\intOrder k}
		= \left\{\left(\c_1 \mid \dots \mid \c_{\intOrder}\right)
		: \begin{array}{c}
			  \c_j \in \linRS{\vecbeta_j}{\vecxi}{\n}{k} \\ \text{for all } j = 1, \dots, \intOrder
		\end{array}\right\} \subseteq \Fqm^{\intOrder n}.
	\end{equation}
\end{definition}

The code $\horIntLinRS{\vecbeta}{\vecxi}{\intOrder}{\intOrder \n}{\intOrder k}$ has length $\intOrder n$ and dimension $\intOrder k$ over $\Fqm$.
Its minimum distance $d$ equals the minimum distance of its component codes, i.e., $d = n - k + 1$.
\ac{HILRS} codes are hence \emph{not} \ac{MSRD}.
Similar to \ac{LRS} codes, we write $\c(\f) = (\c_1(f_1) \mid \dots \mid \c_{\intOrder}(f_{\intOrder}))\in \horIntLinRS{\vecbeta}{\vecxi}{\intOrder}{\intOrder \n}{\intOrder k}$ with $\f = (f_1, \dots, f_\intOrder)$ and $f_j \in \SkewPolyring_{<k}$ for each $j = 1, \dots, \intOrder$ to emphasize the relation to the message polynomials of the component codewords $\c_1, \dots, \c_{\intOrder}$.
We call $\f$ the \emph{message-polynomial vector} corresponding to $\c$.

\begin{remark}
    It is straightforward to generalize~\autoref{def:HILRS_code} and all concepts of this paper to component codes with different length partitions, lengths, and dimensions.
	However, we assume that the component codes only have different code locators $\vecbeta_j$ for $j = 1, \dots, \intOrder$ for simplicity of notation.
	\qed
\end{remark}

\section{A Gao-like Decoder for HILRS Codes}
\label{sec:gao-dec}

We now derive a Gao-like decoder in the spirit of~\cite{gao2003new,wachter2013phd,puchinger2017row} for \ac{HILRS} codes and the interleaved sum-rank-channel model described in~\eqref{eq:sum_rank_channel_hor_int}.
Let $\y = \c + \e \in \Fqm^{\intOrder n}$ denote the received vector after the codeword $\c = \c(\f) \in  \horIntLinRS{\vecbeta}{\vecxi}{\intOrder}{\intOrder \n}{\intOrder k}$ was corrupted by the error $\e \in \Fqm^{\intOrder n}$ of sum-rank weight $\SumRankWeight(\e) = t$ during transmission.
Recall that we assume a uniform error distribution, that is, that $\e$ is chosen uniformly at random from the set of all vectors of sum-rank weight $t$ as given in~\eqref{eq:e-set-random}.

The main ingredient of the decoder is the Gao-like key equation that exploits the relation between certain polynomials to recover the error values as zeros of the error-span polynomial.
Then, the message-polynomial vector $\f$ that corresponds to $\c$ can be retrieved.
\\

The \emph{\ac{ESP}} $\ESP\in\SkewPolyring$ makes use of the error decomposition shown in~\eqref{eq:err_int_vec_decomp}.
It is the skew polynomial that vanishes at all error values, i.e.,
\begin{equation}
	\opev{\ESP}{\shot{\a}{i}}{\xi_i} = \0 \qquad \text{for all } i = 1, \dots, \shots.
\end{equation}
For horizontal interleaving, the component errors $\e_j$ share the same error values $\a$ for all $j = 1, \dots, \intOrder$ according to~\eqref{eq:err_int_vec_decomp}.
This implies that the \ac{ESP} is common for all component errors.

Next let $G_j \in \SkewPolyring$ for each $j = 1, \dots, \intOrder$ be the minimal skew polynomial for the code locators $\vecbeta_j$ with respect to generalized operator evaluation.
Namely,
\begin{equation}
	G_j(x) \defeq \mpolArgs{\vecbeta_j}{\vecxi}(x) \qquad \text{for all } j = 1, \dots, \intOrder.
\end{equation}
Remark that these polynomials only depend on code parameters and can thus be precomputed.
Further, define $R_j \in \SkewPolyring$ for each $j = 1, \dots, \intOrder$ as the interpolation polynomial whose evaluation at the code locators $\vecbeta_j$ yields the channel observation $\y_j$.
That means that $R_j(x) \defeq \intpolArgs{\vecbeta_j}{\vecxi}{\y_j}(x)$ satisfies
\begin{equation}
	\opev{R_j}{\vecbeta_j}{\vecxi} = \y_j \qquad \text{for all } j = 1, \dots, \intOrder.
\end{equation}
Note that the polynomials $R_j$ can be computed directly from the channel observation $\y = (\y_1 \mid \dots \mid \y_{\intOrder})$.

\begin{theorem}[Gao-like Key Equation for HILRS Codes]
	\label{thm:gao_key_equation_hilrs}
	Let $\c = \c(\f) \in \horIntLinRS{\vecbeta}{\vecxi}{\intOrder}{\intOrder \n}{\intOrder k}$ be a codeword corresponding to the message-polynomial vector $\f = (f_1, \dots, f_{\intOrder})$ with $f_j \in \SkewPolyring_{< k}$ for all $j = 1, \dots, \intOrder$.
	Let further $\y = \c + \e \in \Fqm^{\intOrder n}$ denote a channel observation according to~\eqref{eq:sum_rank_channel_hor_int}.
	For the \ac{ESP} $\ESP \in \SkewPolyring$ and the polynomials
	\begin{align}
		G_j(x) = \mpolArgs{\vecbeta_j}{\vecxi}(x) \quad \text{and} \quad
		R_j(x) = \intpolArgs{\vecbeta_j}{\vecxi}{\y_j}(x) \qquad \text{for each } j = 1, \dots, \intOrder,
	\end{align}
	it holds
	\begin{equation}
		\label{eq:key_equation_gao_hilrs}
		\ESP \cdot R_j \equiv \ESP \cdot f_j \modr G_j
		\qquad \text{for all } j = 1, \dots, \intOrder.
	\end{equation}
\end{theorem}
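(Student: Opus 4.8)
The plan is to prove the equivalent assertion that $G_j$ right-divides $h_j \defeq \ESP \cdot (R_j - f_j)$ for every $j = 1, \dots, \intOrder$, since $\ESP R_j \equiv \ESP f_j \modr G_j$ says exactly that $h_j$ leaves remainder $\0$ under right division by $G_j$. The crucial reduction is a root--factor correspondence for the generalized operator evaluation: because $G_j = \mpolArgs{\vecbeta_j}{\vecxi}$ is the \emph{minimal} skew polynomial annihilating the locators $\vecbeta_j$, the set of $h \in \SkewPolyring$ with $\opev{h}{\vecbeta_j^{(i)}}{\xi_i} = \0$ for all $i = 1, \dots, \shots$ is a \emph{left} ideal of $\SkewPolyring$ generated by $G_j$. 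Concretely, writing $h = q G_j + r$ with $\deg(r) < \deg(G_j)$ and using the product rule $\opev{(qG_j)}{b}{\xi_i} = \opev{q}{\opev{G_j}{b}{\xi_i}}{\xi_i}$ together with $\opev{q}{\0}{\xi_i} = \0$, one sees that $r$ itself annihilates $\vecbeta_j$; minimality of $\deg(G_j)$ then forces $r = 0$, i.e.\ $G_j$ right-divides $h$. Hence it suffices to show $\opev{h_j}{\vecbeta_j}{\vecxi} = \0$.

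Second, I would evaluate $h_j$ at the locators block by block. Fixing a block index $i$ and applying the product rule $\opev{(f \cdot g)}{b}{a} = \opev{f}{\opev{g}{b}{a}}{a}$ entrywise with parameter $\xi_i$ gives
\[
  \opev{h_j}{\vecbeta_j^{(i)}}{\xi_i} = \opev{\ESP}{\opev{(R_j - f_j)}{\vecbeta_j^{(i)}}{\xi_i}}{\xi_i}.
\]
Since the operator evaluation is additive in its point argument, and since $R_j$ interpolates $\y_j$ on $\vecbeta_j$ while $f_j$ produces the codeword block $\c_j$ there, the inner evaluation collapses to the error block,
\[
  \opev{(R_j - f_j)}{\vecbeta_j^{(i)}}{\xi_i} = \y_j^{(i)} - \c_j^{(i)} = \e_j^{(i)}.
\]
Thus the task reduces to proving $\opev{\ESP}{\e_j^{(i)}}{\xi_i} = \0$ for every $i$ and $j$.

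Third, I would bring in the horizontal-interleaving error decomposition~\eqref{eq:err_int_vec_decomp}, which yields $\e_j^{(i)} = \a^{(i)} \B_j^{(i)}$ with $\B_j^{(i)} \in \Fq^{t_i \times n_i}$. Hence every entry of $\e_j^{(i)}$ is an $\Fq$-linear combination of the \emph{shared} error values $\a^{(i)}$, with coefficients read off from $\B_j^{(i)}$. Using that the map $b \mapsto \opev{\ESP}{b}{\xi_i}$ is $\Fq$-linear (the maps $b \mapsto \opexp{\xi_i}{b}{l}$ are $\Fq$-linear because $\aut$ and $\der$ are), the vector $\opev{\ESP}{\e_j^{(i)}}{\xi_i}$ is the corresponding $\Fq$-linear combination of the entries of $\opev{\ESP}{\a^{(i)}}{\xi_i}$. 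By the defining property of the \ac{ESP} these entries all vanish, so $\opev{\ESP}{\e_j^{(i)}}{\xi_i} = \0$. Combining the three steps over all $i$ gives $\opev{h_j}{\vecbeta_j}{\vecxi} = \0$, and the first-paragraph reduction delivers the key equation.

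The step I expect to be the main obstacle is making the root--factor correspondence of the first paragraph rigorous in the noncommutative setting: in contrast to the commutative \acl{RS} case one must verify that the annihilator of the locators is a \emph{left} ideal (closed under left multiplication, via the product rule), so that its minimal generator $G_j$ appears as a \emph{right} divisor, matching the $\modr$ in the statement. The remaining two steps are largely bookkeeping; their only real subtlety is the repeated appeal to $\Fq$-linearity of the operator evaluation, which is precisely what allows the single shared \ac{ESP} $\ESP$ to annihilate the distinct component-error blocks $\e_j^{(i)}$ simultaneously.
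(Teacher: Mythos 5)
Your proposal is correct and takes essentially the same route as the paper's proof: evaluate $\ESP \cdot (R_j - f_j)$ at the code locators via the product rule, observe that it vanishes because the shared error values are annihilated by $\ESP$, and invoke minimality of $G_j$ to conclude right-divisibility; you merely make explicit two steps the paper leaves implicit (the left-ideal/division-with-remainder argument behind the root--factor correspondence, and the $\Fq$-linearity argument that transfers vanishing from $\a^{(i)}$ to $\e_j^{(i)}$). One small wording slip: the additivity used to get $\opev{(R_j - f_j)}{\vecbeta_j^{(i)}}{\xi_i} = \e_j^{(i)}$ is additivity of the evaluation in the \emph{polynomial} argument, not the point argument, though both hold.
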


\begin{proof}
	Consider a fixed $j = 1, \dots, \intOrder$ and let us show the equivalent formulation
	\begin{equation}
		\ESP \cdot (R_j - f_j) \equiv 0 \modr G_j
	\end{equation}
	of the key equation.
	By definition, we know that the evaluation of $R_j - f_j$ at $\vecbeta_j$ is $\opev{(R_j - f_j)}{\vecbeta_j}{\vecxi} = \y_j - \c_j = \e_j$.
	Thus,
	\begin{align}
		\opev{(\ESP \cdot (R_j - f_j))}{\vecbeta_j}{\vecxi}
		\overset{(\triangle)}{=} \opev{\ESP}{\opev{(R_j - f_j)}{\vecbeta_j}{\vecxi}}{\vecxi}
		= \opev{\ESP}{\e_j}{\vecxi}
		= \0
	\end{align}
	applies, where $(\triangle)$ follows from the product rule for generalized operator evaluation and the other equalities hold by definition.
	Together with the fact that $G_j$ is the minimal polynomial of the code locators, we conclude that $G_j$ divides $\ESP \cdot (R_j - f_j)$ on the right.
	Since this argument is true for every $j = 1, \dots, \intOrder$, the statement follows.
	\qed
\end{proof}

As can be seen from the proof of~\autoref{thm:gao_key_equation_hilrs}, the Gao-like key equation~\eqref{eq:key_equation_gao_hilrs} is in fact equivalent to
\begin{equation}\label{eq:gao-key-equation-evaluation-equivalent}
	\opev{\left(\ESP \cdot (R_j - f_j) \right)}{\vecbeta_j}{\vecxi} = \0 \qquad \text{for all } j = 1, \dots, \intOrder.
\end{equation}
By rewriting it in terms of a system of $\Fqm$-linear equations, we obtain
\begin{equation}
	\label{eq:gao_linear_system}
	\underbrace{
	\begin{pmatrix}
		\left( \opMoore{t + k}{\vecbeta_1}{\vecxi} \right)^{\top} & & & - \left( \opMoore{t + 1}{\y_1}{\vecxi} \right)^{\top} \\
		& \ddots & & \vdots \\
		& & \left( \opMoore{t + k}{\vecbeta_{\intOrder}}{\vecxi} \right)^{\top} & - \left( \opMoore{t + 1}{\y_{\intOrder}}{\vecxi} \right)^{\top}
	\end{pmatrix}
	}_{=: \M^{\top}}
    \cdot
	\begin{pmatrix}
	    \boldsymbol{\ESP} \f_1 \\
		\vdots \\
		\boldsymbol{\ESP} \f_{\intOrder} \\
		\boldsymbol{\ESP}
	\end{pmatrix}
	= \0.
\end{equation}
Here, the vectors $\boldsymbol{\ESP}$ and $\boldsymbol{\ESP} \f_j$ for $j = 1, \dots, \intOrder$ contain the coefficients of the respective polynomials, i.e.,
\begin{align}
    (\boldsymbol{\ESP} \f_j)^{\top} &\defeq \left( (\ESP \cdot f_j)_1, \dots, (\ESP \cdot f_j)_{t + k} \right) \in \Fqm^{t + k}
	\quad \text{for all } j = 1, \dots, \intOrder \\
	\text{and } \qquad \boldsymbol{\ESP}^{\top} &\defeq \left( \ESP_1, \dots, \ESP_{t + 1} \right) \in \Fqm^{t + 1}.
\end{align}
Equation~\eqref{eq:gao_linear_system} displays a homogeneous system of $\intOrder n$ equations in $\intOrder (t + k) + t + 1 = (\intOrder + 1) t + \intOrder k + 1$ unknowns.
It can be solved by Gaussian elimination with a complexity of $\OCompl{\max(\intOrder n, (\intOrder + 1) t + \intOrder k + 1)^{\omega}}$ operations in $\Fqm$~\cite[Proposition~2.15.]{Storjohann2000}.
\\

As soon as the Gao-like key equation is solved, we have access to a candidate $\tilde{\ESP}$ for the \ac{ESP} $\ESP \in \SkewPolyring$ as well as to candidates $p_j$ for the products $\ESP \cdot f_j \in \SkewPolyring_{< t + k}$ for all $j = 1, \dots, \intOrder$.
Thus, for any $j = 1, \dots, \intOrder$, left division of $p_j$ by $\tilde{\ESP}$ recovers a candidate $\tilde{f}_j$ for the $j$-th message polynomial $f_j$.
If the remainder $r_j$ of the left division of $p_j$ by $\tilde{\ESP}$ is nonzero for any $j = 1, \dots, \intOrder$ or if any of the $\tilde{f}_1, \dots, \tilde{f}_{\intOrder}$ has degree at least $k$, we declare a decoding failure.
Otherwise, the decoding was correct and $\tilde{f}_j = f_j$ applies for all $j = 1, \dots, \intOrder$.
\autoref{alg:decoder} summarizes all steps of the Gao-like decoder.
\\

\begin{algorithm}
    \caption{Gao-like Decoder for \ac{HILRS} Codes}
    \label{alg:decoder}

    \Input{%
		received vector $\y \in \Fqm^{\intOrder n}$ with $\y = \c(\f) + \e$ according to~\eqref{eq:sum_rank_channel_hor_int} and with $\c(\f) \in \horIntLinRS{\vecbeta}{\vecxi}{\intOrder}{\intOrder \n}{\intOrder k}$\newline{}
		precomputed $G_1, \dots, G_{\intOrder}$ with $G_j \defeq \mpolArgs{\vecbeta_j}{\vecxi}(x)$ for all $j = 1, \dots, \intOrder$
	}
    \Output{$\f = (f_1, \dots, f_{\intOrder})$ or "decoding failure"}

	$R_j \defeq \intpolArgs{\vecbeta_j}{\vecxi}{\y_j}(x) \in \SkewPolyring$ for all $j = 1, \dots, \intOrder$\;

	\tcc{use $\ESP \cdot R_j \equiv \ESP \cdot f_j \modr G_j$ to find $p_j \triangleq \ESP \cdot f_j$ and $\tilde{\ESP} \triangleq \ESP$}

	$(p_1, \dots, p_{\intOrder}, \tilde{\ESP}) \defeq \keyeq(R_1, \dots, R_{\intOrder}, G_1, \dots, G_{\intOrder}, n, k, \intOrder)$\;

	\ForAll{$j = 1, \dots, \intOrder$}{
		$(\tilde{f}_j, r_j) \defeq \ldiv(p_j, \tilde{\ESP})$\;
		\If{$r_j \neq 0$ \textup{\textbf{or}} $\deg(\tilde{f}_j) \geq k$}{
			\Return "decoding failure"
		}
	}

	\Return $\f \defeq (\tilde{f}_1, \dots, \tilde{f}_{\intOrder})$
\end{algorithm}

Let us now further investigate the structure of $\M^{\top}$, which gives rise to the decoding-failure probability $\failProb$.
Remark that the system~\eqref{eq:gao_linear_system} has a nontrivial solution by definition, which implies $\rk_{q^m}(\M) \leq (\intOrder + 1) t + \intOrder k$.
Moreover, a decoding failure can only occur if the solution space of~\eqref{eq:gao_linear_system} has dimension greater than one.
In other words, $\rk_{q^m}(\M^{\top}) = \rk_{q^m}(\M) < (\intOrder + 1) t + \intOrder k$ must apply and we obtain the inequality
\begin{equation}
	\label{eq:failure_prob_and_rank_M}
	\failProb \leq \Pr\left( \rk_{q^m}(\M) < (\intOrder + 1) t + \intOrder k) \right).
\end{equation}
The following lemma gives a characterization of when the solution space of~\eqref{eq:gao_linear_system} is one-dimensional.
Recall that this case implies correct decoding.
\begin{lemma}
	\label{lem:equivalent-failure-bound}
	Consider a vector $\y = \c + \e \in \Fqm^{\intOrder n}$ that was received after transmitting $\c \in \horIntLinRS{\vecbeta}{\vecxi}{\intOrder}{\intOrder \n}{\intOrder k}$ over the channel~\eqref{eq:sum_rank_channel_hor_int}.
	Assume that the error has weight $\SumRankWeight(\e) = t \leq n - k$ and can be decomposed into $\e = \a \cdot \B$ according to~\eqref{eq:err_int_vec_decomp}.
	Further, define $\M$ as in~\eqref{eq:gao_linear_system} and let $\H = \diag(\H_1, \dots, \H_\intOrder)\in \Fqm^{\intOrder (n-k-t) \times \intOrder n}$ be a parity-check matrix of the code $\horIntLinRS{\vecbeta}{\vecxi}{\intOrder}{\intOrder \n}{\intOrder(k + t)}$.
	Then,
    \begin{equation}
		\rk_{q^m}(\M) = (\intOrder + 1) t + \intOrder k
		\qquad \text{ if and only if } \qquad
		\rk_{q^m}(\B \H^{\top}) = t.
	\end{equation}
\end{lemma}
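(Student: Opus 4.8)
The plan is to determine $\dim_{\Fqm}\ker(\M^{\top})$ exactly and to read the claim off from it. Since the number of unknowns in~\eqref{eq:gao_linear_system} is $N = (\intOrder+1)t + \intOrder k + 1$ and the system always admits the genuine solution built from $\ESP$, its kernel is nonzero and
\[
\rk_{q^m}(\M) = (\intOrder+1)t + \intOrder k \iff \dim_{\Fqm}\ker(\M^{\top}) = 1 .
\]
I would prove the stronger statement $\dim_{\Fqm}\ker(\M^{\top}) = t - \rk_{q^m}(\B\H^{\top}) + 1$. Because $\rk_{q^m}(\B\H^{\top}) \le t$ in general, this value equals $1$ exactly when $\rk_{q^m}(\B\H^{\top}) = t$, which is the assertion.

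First I would parametrise the kernel by candidate error-span polynomials. Identifying the coefficient vectors $\vec{p}_j,\boldsymbol{\ESP}$ with polynomials $p_j$ of degree $< t+k$ and $\sigma$ of degree $\le t$, block row $j$ of $\M^{\top}\vec{z} = \0$ reads $\opev{p_j}{\vecbeta_j}{\vecxi} = \opev{\sigma}{\y_j}{\vecxi} = \opev{(\sigma R_j)}{\vecbeta_j}{\vecxi}$, the last step being the product rule. The remainder $\sigma R_j \modr G_j$ always has degree $< n = \deg(G_j)$, and since $t \le n-k$ gives $t+k \le n$, generalized operator evaluation at $\vecbeta_j$ — injective on polynomials of degree $< n$ as $\deg(G_j)=n$ — forces $p_j$ (of degree $<t+k \le n$) to equal $\sigma R_j \modr G_j$, which is admissible iff $\deg(\sigma R_j \modr G_j) < t+k$. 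Hence $\vec{z} \mapsto \sigma$ is an $\Fqm$-linear isomorphism of $\ker(\M^{\top})$ onto
\[
\mathcal{S} = \bigl\{ \sigma \in \SkewPolyring : \deg(\sigma) \le t,\ \deg(\sigma R_j \modr G_j) < t+k \text{ for all } j \bigr\}.
\]

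Next I would convert the degree constraint into a syndrome condition. Writing $R_j = f_j + (R_j - f_j)$, using $\opev{(R_j - f_j)}{\vecbeta_j}{\vecxi} = \e_j$, the product rule and additivity of the evaluation, the polynomial $\sigma R_j \modr G_j$ equals $\sigma f_j$ plus the degree-$(<n)$ interpolant of $\vec{s}_j := \opev{\sigma}{\e_j}{\vecxi}$; as $\deg(\sigma f_j) < t+k$, the constraint $\deg(\sigma R_j \modr G_j) < t+k$ is equivalent to $\vec{s}_j \in \linRS{\vecbeta_j}{\vecxi}{\n}{k+t}$, i.e.\ to $\H_j \vec{s}_j^{\top} = \0$. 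Using the decomposition $\e = \a\B$ from~\eqref{eq:err_int_vec_decomp} and that $\opev{\sigma}{\cdot}{\xi_i}$ is $\Fq$-linear while the entries of $\B$ lie in $\Fq$, the error locations factor out and $(\vec{s}_1 \mid \dots \mid \vec{s}_\intOrder) = \vec{v}\,\B$ for $\vec{v} := \bigl(\opev{\sigma}{\a^{(1)}}{\xi_1} \mid \dots \mid \opev{\sigma}{\a^{(\shots)}}{\xi_\shots}\bigr) \in \Fqm^t$. Because $\H = \diag(\H_1, \dots, \H_\intOrder)$, the $\intOrder$ conditions $\H_j\vec{s}_j^{\top} = \0$ collapse into the single relation $\H\B^{\top}\vec{v}^{\top} = \0$, so $\sigma \in \mathcal{S}$ iff $\vec{v} \in \ker(\H\B^{\top})$, a space of dimension $t - \rk_{q^m}(\H\B^{\top}) = t - \rk_{q^m}(\B\H^{\top})$.

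Finally I would count dimensions through the map $\Phi : \sigma \mapsto \vec{v}$. A polynomial $\sigma$ of degree $\le t$ annihilates all $t$ error values iff $\ESP$ is a right divisor of $\sigma$; since $\deg(\ESP) = t$ by the minimal-polynomial degree characterisation applied to the error values (lying in distinct nontrivial conjugacy classes), this forces $\sigma \in \Fqm\cdot\ESP$. Thus $\ker\Phi = \Fqm\cdot\ESP$ is one-dimensional and $\Phi$ maps the $(t+1)$-dimensional space $\{\deg \le t\}$ onto $\Fqm^t$. As $\mathcal{S} = \Phi^{-1}\bigl(\ker(\H\B^{\top})\bigr)$, surjectivity of $\Phi$ yields $\dim_{\Fqm}\mathcal{S} = \dim_{\Fqm}\ker(\H\B^{\top}) + \dim_{\Fqm}\ker\Phi = t - \rk_{q^m}(\B\H^{\top}) + 1$, completing the count. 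I expect the main obstacle to be the bookkeeping of the middle paragraph: justifying that the degree bound is precisely membership in $\linRS{\vecbeta_j}{\vecxi}{\n}{k+t}$ (where the hypothesis $t \le n-k$ is indispensable), and that the block form~\eqref{eq:def_B_matrix} of $\B$ makes the per-component syndrome conditions merge into the single matrix equation $\H\B^{\top}\vec{v}^{\top} = \0$.
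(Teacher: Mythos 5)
Your proof is correct, and it takes a genuinely different route from the paper's. The paper argues on the \emph{row space} of $\M$: it replaces the blocks $\opMoore{t+1}{\y_j}{\vecxi}$ by $\opMoore{t+1}{\e_j}{\vecxi}$ without changing the row space (the codeword contributions lie in the row spaces of the upper Moore blocks), factors the error rows as $\opMoore{t+1}{\a}{\vecxi} \cdot \B$, strips off this full-column-rank left factor via a rank lemma so that only the stacked matrix with upper part $\U$ and lower part $\B$ remains, and concludes with the formula $\rk_{q^m}(\U) + \rk_{q^m}(\B) - \dim_{q^m}(\RowspaceFqm{\U} \cap \RowspaceFqm{\B})$ together with the duality $\RowspaceFqm{\U} = \RowspaceFqm{\H}^{\perp}$. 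You instead work with the \emph{null space} of $\M^{\top}$: you parametrize the solution space of the key equation by candidate error-span polynomials $\sigma$, convert the degree condition $\deg(\sigma R_j \modr G_j) < t+k$ into the syndrome condition $\H\B^{\top}\vec{v}^{\top} = \0$ for $\vec{v} = \Phi(\sigma)$, and count dimensions through the evaluation map $\Phi$, whose kernel $\Fqm \cdot \ESP$ is one-dimensional since $\deg(\ESP) = t$. (One phrasing slip: it is the evaluation parameters $\xi_i$, not the error values, that lie in distinct nontrivial conjugacy classes; combined with $\rk_q(\a^{(i)}) = t_i$ this is what yields $\deg(\ESP) = t$.) Your intermediate steps are sound: the forced identification $p_j = \sigma R_j \modr G_j$ uses $t + k \le n$ and injectivity of evaluation below degree $n$, and the equivalence of the degree bound with $\vec{s}_j \in \linRS{\vecbeta_j}{\vecxi}{\n}{k+t}$ is exactly where the hypothesis $t \le n-k$ enters, mirroring its role in the paper. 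What your route buys is a strictly stronger conclusion: the exact nullity $\dim_{q^m} \ker(\M^{\top}) = t - \rk_{q^m}(\B\H^{\top}) + 1$, equivalently $\rk_{q^m}(\M) = \intOrder(t+k) + \rk_{q^m}(\B\H^{\top})$, which quantifies the degeneracy of the key equation and puts spurious solutions in correspondence with $\ker(\H\B^{\top})$ --- potentially useful for sharper failure-probability or list-size analyses. What the paper's route buys is brevity: it stays entirely at the matrix level, avoids all degree bookkeeping, and settles the equivalence in a few lines once the factorization is in place.
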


\begin{proof}
	First note that the upper part of $\M$ is a generator matrix of the code $\horIntLinRS{\vecbeta}{\vecxi}{\intOrder}{\allowbreak\intOrder \n}{\intOrder(k + t)}$.
	In other words, the $j$-th block on its diagonal generates $\linRS{\vecbeta_j}{\vecxi}{\n}{\allowbreak k + t}$ for all $j = 1, \dots, \intOrder$.
	For any $j = 1, \dots, \intOrder$, the additivity of the generalized operator evaluation yields $\opMoore{t + 1}{\y_j}{\vecxi} = \opMoore{t + 1}{\c_j}{\vecxi} + \opMoore{t + 1}{\e_j}{\vecxi}$.
	Further, $\c_j \in \linRS{\vecbeta_j}{\vecxi}{\n}{k} = \RowspaceFqm{\opMoore{k}{\vecbeta_j}{\vecxi}}$ implies $\opexp{\vecxi}{\c_j}{\iota} \in \RowspaceFqm{\opMoore{k + \iota}{\vecbeta_j}{\vecxi}}$ for all $\iota = 1, \dots, t$.
	We can hence consider the matrix
	\begin{equation}\label{eq:equivalent_M}
		\widetilde{\M} =
		\begin{pmatrix}
			\opMoore{t + k}{\vecbeta_1}{\vecxi} & & \\
			& \ddots & \\
			& & \opMoore{t + k}{\vecbeta_{\intOrder}}{\vecxi} \\
			\hline
			\opMoore{t + 1}{\e_1}{\vecxi} & \dots & \opMoore{t + 1}{\e_{\intOrder}}{\vecxi}
		\end{pmatrix}
		=:
		\begin{pmatrix}
			\U \\
			\hline
			\L
		\end{pmatrix}
	\end{equation}
	which has the same $\Fqm$-linear row space, and thus the same $\Fqm$-rank, as $\M$.
	In the following, we denote the upper $\intOrder (t + k)$ rows of $\M$ by $\U$ and the lower part by $\L$ for convenience.
	The error decomposition and the $\Fq$-linearity of the generalized operator evaluation let us write $\L = \opMoore{t + 1}{\a}{\vecxi} \cdot \B$.
	Therefore,
	\begin{equation}
		\widetilde{\M}
		= \left(
		\begin{array}{c|c}
			\I_{\intOrder (t + k)} & \0 \\
			\hline
			\0 & \opMoore{t + 1}{\a}{\vecxi}
		\end{array}
		\right) \cdot
		\begin{pmatrix}
			\U \\
			\hline
			\B
		\end{pmatrix}
	\end{equation}
	applies, where $\I_{\intOrder (t + k)}$ denotes the identity matrix of size $\intOrder (t + k) \times \intOrder (t + k)$.
	Since the left matrix has full column rank over $\Fqm$,~\cite[Theorem~2]{matsaglia1974} yields
	\begin{equation}
		\rk_{q^m}(\widetilde{\M}) =
		\rk_{q^m}
		\begin{pmatrix}
			\U \\
			\hline
			\B
		\end{pmatrix}.
	\end{equation}

	Define $\H \defeq \diag(\H_{1}, \dots, \H_{\intOrder}) \in \Fqm^{\intOrder (n - k - t) \times \intOrder n}$ with $\H_j$ being a parity-check matrix of the code $\linRS{\vecbeta_j}{\vecxi}{\n}{k + t}$ for all $j = 1, \dots, \intOrder$.
	Then, $\H$ is a parity-check matrix of $\horIntLinRS{\vecbeta}{\vecxi}{\intOrder}{\intOrder \n}{\intOrder(k + t)}$ and satisfies $\U \H^{\top} = \0$.
	Since
	\begin{align}
		\rk_{q^m}(\M) &= \rk_{q^m}(\U) + \rk_{q^m}(\B) - \dim_{q^m}(\RowspaceFqm{\U} \cap \RowspaceFqm{\B})
		\\
		&\leq (\intOrder + 1) t + \intOrder k - \dim_{q^m}(\RowspaceFqm{\U} \cap \RowspaceFqm{\B})
	\end{align}
	holds, the equality $\rk_{q^m}(\M) = (\intOrder + 1) t + \intOrder k$ is equivalent to $\RowspaceFqm{\U} \cap \RowspaceFqm{\B} = \{\0\}$ and thus to $\RowspaceFqm{\H}^{\perp} \cap \RowspaceFqm{\B} = \{\0\}$.
	This is equivalent to $\rk_{q^m}(\B \H^{\top}) = t$, which proves the lemma.
    \qed
\end{proof}

This equivalent reformulation gives a condition on the error weight $t$ and thus determines the decoding radius.
In fact, the matrix $\B \H^{\top}$ has $t$ rows and $\intOrder (n - k - t)$ columns and can achieve $\rk_{q^m}(\B \H^{\top}) = t$ only if $t \leq \intOrder (n - k - t)$ applies.
Since we obtain a decoding failure in all other cases, we obtain the necessary condition
\begin{equation}
	\label{eq:dec-radius}
    t \leq \tmax \defeq \frac{\intOrder}{\intOrder + 1} (n - k)
\end{equation}
for successful decoding.

We now focus on the zero-derivation case and derive an upper bound on the probability that $\rk_{q^m}(\B\H^\top) < t$ which will also bound the decoding-failure probability according to~\autoref{lem:equivalent-failure-bound}.
Recall that we can choose $\H$ such that $\H_1, \dots, \H_\intOrder$ are generalized Moore matrices, as the dual of an \ac{LRS} code is again an \ac{LRS} code in the zero-derivation setting~\cite[Theorem~4]{martinez2019reliable}.
For such a choice of $\H$, the product $\B \H^{\top} = (\B_1 \H_{1}^{\top} \mid \dots \mid \B_{\intOrder} \H_{\intOrder}^{\top})$ is the transpose of vertically stacked generalized Moore matrices because $\B = (\B_1 \mid \dots \mid \B_{\intOrder})$ contains only $\Fq$-elements and $\op{\xi}{\cdot}$ is $\Fq$-linear for a fixed $\xi \in \Fqm$.
Namely,
\begin{equation}
    \H \B^{\top} =
	\begin{pmatrix}
		\opMoore{t + k}{\h_1\B_1^\top}{\vecxi} \\
		\dots \\
		\opMoore{t + k}{\h_\intOrder\B_\intOrder^\top}{\vecxi} \\
	\end{pmatrix},
\end{equation}
where $\h_j$ denotes the first row of $\H_j$ for each $j = 1, \dots, \intOrder$.

Further recall that, for a fixed rank partition $\t$, the matrix $\B$ is uniformly distributed among the set of all matrices of a particular form having fixed sum-rank weight as described in~\eqref{eq:B-set-random}.
As $\SumRankWeight{(\h_j)}=n$ applies for every $j=1,\dots,\intOrder$, the $(\intOrder \times t)$-matrix containing the vectors $\h_j\B_{j}^{\top}$ as rows is chosen uniformly at random from all matrices in $\Fqm^{\intOrder \times t}$ with sum-rank weight $t$.
This allows us to apply parts of the proof of~\cite[Lemma~7]{bartz2023fast}.

In the zero-derivation setting, we thus obtain the upper bound
\begin{equation}
	\label{eq:failure-probability}
	\failProb \leq \Pr\left( \rk_{q^m}(\B\H^\top) < t \right)
    \leq \kappa_{q}^{\shots + 1} q^{-m ((\intOrder + 1) (\tmax - t) + 1)}
\end{equation}
on the decoding-failure probability $\failProb$, where $\tmax \defeq \frac{\intOrder}{\intOrder + 1} (n - k)$ and $\kappa_q < 3.5$ is defined as $\kappa_q \defeq \prod\limits_{i} \frac{1}{1 - q^{-i}}$ for any prime power $q$.
\\

We implemented the proposed decoder in SageMath~\cite{sage} and ran a Monte Carlo simulation to heuristically verify the tightness of the upper bound on the decoding-failure probability given in~\eqref{eq:failure-probability}.
Note that the actual failure probability is hard to simulate for reasonable parameter sizes, as even the upper bound decreases exponentially.
To obtain observable results, we chose $\Fqm = \F_{3^8}$, $\Fq = \F_3$, and an \ac{HILRS} code of length $n = 16$ and dimension $k = 4$ with
respect to the Frobenius automorphism.
We considered $\shots = 2$ blocks of the same length, namely $\n = (8, 8)$, interleaving order $s = 3$, and randomly chosen
errors of sum-rank weight $t = \tmax = 9$.
The failure probability that we observed for $100$ Monte Carlo errors is $1.569 \cdot 10^{-4}$ while the bound yields $6.535 \cdot 10^{-3}$.
\\

We finish this section with a summary of the results we have obtained so far and give a complexity analysis of the Gao-like decoder for \ac{HILRS} codes.

\begin{theorem}[Gao-like Decoding of \ac{HILRS} Codes]
	\label{thm:gao-like_decoder}
	Consider the transmission of a codeword $\c \in \horIntLinRS{\vecbeta}{\vecxi}{\intOrder}{\intOrder \n}{\intOrder k}$ over the channel~\eqref{eq:sum_rank_channel_hor_int}.
	Let $\y = \c + \e \in \Fqm^{\intOrder n}$ denote the received word and assume that the error $\e$ has bounded sum-rank weight
	\begin{equation}
		\label{eq:dec-radius-cond}
	    \SumRankWeight(\e) = t \leq \frac{\intOrder}{\intOrder + 1} (n - k).
	\end{equation}
	Then, the Gao-like decoder from~\autoref{alg:decoder} can recover $\c$ with a failure probability $\failProb$ that is bounded by
	\begin{equation}
	    \failProb \leq \kappa_{q}^{\shots + 1} q^{-m ((\intOrder + 1) (\tmax - t) + 1)} < 3.5^{\shots + 1} q^{-m ((\intOrder + 1) (\tmax - t) + 1)}
	\end{equation}
	in the zero-derivation setting.
	If the key equation~\eqref{eq:key_equation_gao_hilrs} is solved via Gaussian elimination in the formulation of~\eqref{eq:gao_linear_system}, the overall complexity of the decoder is in the order of $\OComplTilde{(\intOrder n)^{\omega}} \subseteq \OComplTilde{(\intOrder n)^{2.373}}$ operations in $\Fqm$.
\end{theorem}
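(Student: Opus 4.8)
The plan is to assemble the statement from three ingredients already prepared: the Gao-like key equation of \autoref{thm:gao_key_equation_hilrs}, the rank characterisation of \autoref{lem:equivalent-failure-bound}, and the per-step costs of \autoref{alg:decoder}. I would begin with \textbf{correctness}. By \autoref{thm:gao_key_equation_hilrs}, the coefficient tuple $(\boldsymbol{\ESP}\f_1, \dots, \boldsymbol{\ESP}\f_\intOrder, \boldsymbol{\ESP})$ formed from the true error-span polynomial $\ESP$ and the true message polynomials is a nonzero solution of the homogeneous system~\eqref{eq:gao_linear_system}. Whenever the solution space is one-dimensional over $\Fqm$, any solution returned by \keyeq{} is a scalar multiple $\lambda \cdot (\boldsymbol{\ESP}\f_1, \dots, \boldsymbol{\ESP}\f_\intOrder, \boldsymbol{\ESP})$ with $\lambda \in \Fqm^\ast$. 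The resulting candidates satisfy $p_j = \lambda(\ESP \cdot f_j) = (\lambda \ESP)\cdot f_j = \tilde\ESP \cdot f_j$ by associativity, so the left division in \autoref{alg:decoder} returns $\tilde f_j = f_j$ with zero remainder and $\deg(\tilde f_j) < k$; hence no failure is declared and the decoder outputs the correct $\f$.

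For the \textbf{failure probability} I would argue contrapositively: a failure can occur only if the solution space of~\eqref{eq:gao_linear_system} has dimension larger than one, i.e.\ only if $\rk_{q^m}(\M) < (\intOrder + 1)t + \intOrder k$. Since $\rk_{q^m}(\M) \le (\intOrder+1)t + \intOrder k$ and $\rk_{q^m}(\B\H^\top) \le t$ always hold, the equivalence of the maximal cases in \autoref{lem:equivalent-failure-bound} upgrades to $\rk_{q^m}(\M) < (\intOrder+1)t + \intOrder k$ if and only if $\rk_{q^m}(\B\H^\top) < t$. Under the radius hypothesis~\eqref{eq:dec-radius-cond}, which is precisely the condition for $\rk_{q^m}(\B\H^\top) = t$ to be attainable, the bound~\eqref{eq:failure-probability} established in the zero-derivation setting then gives $\failProb \le \kappa_q^{\shots+1} q^{-m((\intOrder+1)(\tmax - t)+1)}$, and $\kappa_q < 3.5$ yields the displayed numerical weakening.

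For the \textbf{complexity} I would tally the three phases of \autoref{alg:decoder}. Computing the $\intOrder$ interpolation polynomials $R_j$ and performing the $\intOrder$ left divisions each costs $\OComplTilde{\mathcal{M}_{q, m}(n)}$, for a total of $\OComplTilde{\intOrder\,\mathcal{M}_{q, m}(n)}$. The dominant step is solving~\eqref{eq:gao_linear_system} by Gaussian elimination at cost $\OCompl{\max(\intOrder n, (\intOrder+1)t + \intOrder k + 1)^\omega}$; substituting the radius bound $t \le \frac{\intOrder}{\intOrder+1}(n-k)$ gives $(\intOrder+1)t + \intOrder k + 1 \le \intOrder n + 1$, so both arguments are $\OCompl{\intOrder n}$ and this phase costs $\OCompl{(\intOrder n)^\omega}$. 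As $\mathcal{M}_{q, m}(n) \subseteq \OCompl{n^{1.635}}$, the elimination term dominates and the overall cost is $\OComplTilde{(\intOrder n)^\omega} \subseteq \OComplTilde{(\intOrder n)^{2.373}}$.

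I expect the only genuinely delicate point to be the correctness bookkeeping: verifying that scalar multiplication commutes through the skew-polynomial product so that $p_j = \tilde\ESP \cdot f_j$ holds exactly, and confirming that the zero-remainder and degree tests in \autoref{alg:decoder} are passed precisely in the one-dimensional case. Everything else reduces to direct substitution into results already proved earlier in the excerpt.
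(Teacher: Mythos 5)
Your proposal is correct and follows essentially the same route as the paper: the failure-probability bound via the rank characterization of \autoref{lem:equivalent-failure-bound} together with the previously derived bound~\eqref{eq:failure-probability}, and the complexity by tallying the same three phases, with Gaussian elimination on~\eqref{eq:gao_linear_system} dominating at $\OCompl{(\intOrder n)^{\omega}}$. The only difference is that you make explicit the scalar-multiple and left-division bookkeeping showing that a one-dimensional solution space forces $p_j = \tilde{\ESP} \cdot f_j$ and hence correct recovery, a step the paper leaves implicit in its remark that one-dimensionality implies correct decoding.
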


\begin{proof}
	The decoding radius and the bound on the failure probability were derived above.
	Let us thus focus on the complexity analysis.
	\begin{itemize}
		\item The computation of a minimal or an interpolation polynomial of degree at most $n$ can be done with complexity $\OComplTilde{\mathcal{M}_{q,m}(n)}$ according to~\cite[Section II.D.]{bartz2021orderBases}, e.g.\ by using the recursive formula~\eqref{eq:mpol-lclm}.
			Thus, the computation of $G_1, \dots, G_{\intOrder}$ and $R_1, \dots, R_{\intOrder}$ takes $\OComplTilde{\intOrder \mathcal{M}_{q,m}(n)}$ operations in $\Fqm$.
		\item Finding the solution of the key equation via Gaussian elimination has complexity $\OCompl{\max(\intOrder n, (\intOrder + 1) t + \intOrder k + 1)^{\omega}}$ as stated above.
				Since equation~\eqref{eq:dec-radius-cond} ensures $\intOrder n \geq (\intOrder + 1) t + \intOrder k + 1$, we obtain $\OCompl{(\intOrder n)^{\omega}}$.
		\item The for-loop runs in $\OComplTilde{\intOrder \mathcal{M}_{q,m}(n)}$ operations in $\Fqm$ because the left division in line 4 has complexity $\OComplTilde{\mathcal{M}_{q,m}(n)}$ for each $j = 1, \dots, \intOrder$ according to~\cite[Section II.D.]{bartz2021orderBases}.
		Checking the conditions for a decoding failure is essentially for free.
	\end{itemize}
	Note that $\OComplTilde{\intOrder \mathcal{M}_{q,m}(n)} \subseteq \OComplTilde{\intOrder n^{\min(\frac{\omega + 1}{2}, 1.635)}} \subseteq \OComplTilde{\intOrder n^{1.635}}$.
	Thus, solving the Gao-like key equation determines the overall complexity of $\OComplTilde{(\intOrder n)^{\omega}}$ operations in $\Fqm$.
	\qed
\end{proof}

\section{A Fast Variant of the Gao-like Decoder for HILRS Codes}
\label{sec:gao-mab}

We now present a fast variant of the decoder from~\autoref{alg:decoder}.
As we have seen in its complexity analysis in the proof of~\autoref{thm:gao-like_decoder}, the complexity-dominating task is the solution of the Gao-like key equation.
Thus, we focus on this problem and obtain a performance gain by reformulating it in terms of minimal approximant bases.

Note that we restrict ourselves to the zero-derivation case in this section, even though the used concepts and algorithms generalize straightforwardly to nonzero derivations.
The reason is that the complexity analysis of algorithms involving skew-polynomial operations with nonzero derivations is more involved and was e.g.\ not conducted for the minimal-approximant-basis algorithm~\cite[Algorithm~5]{bartz2021orderBases} that we use for the speedup.

\subsection{Minimal Approximant Bases}
\label{sec:min-approx-bases}

Let us give some definitions and basic properties of minimal approximant bases.
Note that we will only discuss left/row approximant bases and leave out their right/column counterparts, as we are only concerned with these.

Let $\v \in \ZZ^{a}$ be a shifting vector.
Then, the \emph{$\v$-shifted row degree} of a vector $\b \in \SkewPolyringZeroDer^{a}$ is
\begin{equation*}
    \rdeg_{\v}(\b) \defeq \max_{j = 1, \dots, a} \{\deg(b_{j} + v_j)\}.
\end{equation*}
For $\b \in \SkewPolyringZeroDer^{a} \setminus \{\0\}$ and $\v = (v_1, \dots, v_b) \in \ZZ^{a}$, the \emph{$\v$-pivot index} of $\b$ is the largest index $i \in \{1, \dots, a\}$ with $\deg(b_i) + v_i = \rdeg_{\v}(\b)$.

A matrix $\W \in \SkewPolyringZeroDer^{a \times b}$ with $a \leq b$ is in \emph{$\v$-ordered row weak-Popov form} if the $\v$-pivot indices of its rows are strictly increasing in the row index.

A vector $\b \in \SkewPolyringZeroDer^{a}$ is a \emph{left approximant of order $d \in \NN$} of a matrix
$\W \in \SkewPolyringZeroDer^{a \times b}$ if
\begin{equation*}
    \b \W \equiv \0 \modr x^d.
\end{equation*}

A \emph{left $\v$-ordered weak-Popov approximant basis of $\A$ of order $d \in \NN$} is a full-rank matrix $\B \in \SkewPolyringZeroDer^{a \times a}$
in $\v$-ordered row weak-Popov form whose rows are a basis of all left approximants of $\A$ of order $d$.

\subsection{Solving the Gao-like Key Equation via Minimal Approximant Bases}
\label{sec:key-eq-min-approx-bases}

The Gao-like key equation~\eqref{eq:key_equation_gao_hilrs} can also be written as
\begin{equation}
	\label{eq:key_equation_gao_hilrs_non_mod}
	\ESP\cdot f_j = \chi_j\cdot G_j + \ESP\cdot R_j
	\quad \text{for all } j = 1,\dots,\intOrder,
\end{equation}
where $\chi_j\in\SkewPolyringZeroDer$ exists according to the Euclidean algorithm and has degree at most $k+t$ for each $j=1,\dots,\intOrder$.
Observe that~\eqref{eq:key_equation_gao_hilrs_non_mod} implies that the vector
\begin{equation}
	(\ESP\cdot f_1, \dots, \ESP\cdot f_\intOrder, \ESP,\chi_1, \dots, \chi_\intOrder)
	\in \SkewPolyringZeroDer^{2 \intOrder + 1}
\end{equation}
is in the left kernel of the matrix
\begin{equation}
	\label{eq:def_gao_mat_hilrs}
	\W =
	\begin{pmatrix}
		-\I_{\intOrder}
		\\
		\R
		\\
		\G
	\end{pmatrix}
	\in \SkewPolyringZeroDer^{(2 \intOrder + 1) \times \intOrder}
\end{equation}
where $\R \defeq (R_1, \dots, R_{\intOrder})$ and $\G \defeq \diag(G_1, \dots, G_{\intOrder})$.

The following result based on~\cite[Lemma~21]{bartz2021orderBases} is fundamental for reformulating the Gao-like key equation as a minimal-approximant-bases problem.

\begin{lemma}
	\label{lem:mab-conversion}
	Consider the same setting as in~\autoref{thm:gao-like_decoder} and let $\W$ be defined as in~\eqref{eq:def_gao_mat_hilrs}.
	Further write
	\begin{align}
		\vecrho \defeq (\ESP\cdot f_1, \dots, \ESP\cdot f_\intOrder, \ESP)
		\quad \text{and} \quad
		\vecchi \defeq (\chi_1, \dots, \chi_\intOrder)
	\end{align}
	for simplicity.
	Further define the shifting vectors $\w \defeq (\0_{\intOrder}, k-1) \in \ZZ^{\intOrder + 1}$ and $\v \defeq (\0_{\intOrder}, k-1, \0_{\intOrder}) \in \ZZ^{2 \intOrder + 1}$,
	as well as the degree constraints $D \defeq \tmax = \frac{\intOrder}{\intOrder + 1} (n - k)$
	and $d \defeq \degConstraint + n$.
	Then,
	\begin{equation}
		\label{eq:mab-lemma-normal}
		(\vecrho \mid \vecchi) \cdot \W = \0 \quad \text{and} \quad \rdeg_{\w}(\vecrho) < D
	\end{equation}
	if and only if
	\begin{equation}
		\label{eq:mab-lemma-mab}
		(\vecrho \mid \vecchi) \cdot \W \equiv \0 \modr x^d \quad \text{and} \quad \rdeg_{\v}(\vecrho \mid \vecchi) < D.
	\end{equation}
\end{lemma}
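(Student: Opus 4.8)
The plan is to prove the equivalence as a purely degree-theoretic statement about the block-partitioned vector $(\vecrho \mid \vecchi) \in \SkewPolyringZeroDer^{2\intOrder + 1}$, using only that $\deg(fg) = \deg f + \deg g$ holds in the domain $\SkewPolyringZeroDer$, together with the two inputs $\deg(G_j) = n$ and $\deg(R_j) \le n - 1$. The first holds because $G_j = \mpolArgs{\vecbeta_j}{\vecxi}$ vanishes on the $\Fq$-linearly independent locators $\vecbeta_j$ with parameters $\vecxi$ from distinct nontrivial conjugacy classes, so its degree attains the maximal value $n$; the second is the defining degree bound of $R_j = \intpolArgs{\vecbeta_j}{\vecxi}{\y_j}$. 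First I would expand the product column by column: writing the entries of $\vecrho$ as $\rho_1, \dots, \rho_\intOrder$ in its first $\intOrder$ positions and $\ESP$ in the last, the $j$-th entry of $(\vecrho \mid \vecchi)\cdot\W$ equals $\chi_j G_j + \ESP R_j - \rho_j$. Hence $(\vecrho \mid \vecchi)\cdot\W = \0$ is exactly the non-modular key equation \eqref{eq:key_equation_gao_hilrs_non_mod}, whereas $(\vecrho \mid \vecchi)\cdot\W \equiv \0 \modr x^d$ asserts that each such entry has no monomials of degree below $d$ (right division by the monomial $x^d$ truncates to degrees $< d$). I would also record the immediate consequence of the two shifts that $\rdeg_{\v}(\vecrho \mid \vecchi) = \max\big(\rdeg_{\w}(\vecrho),\, \max_j \deg(\chi_j)\big)$, whence $\rdeg_{\w}(\vecrho) \le \rdeg_{\v}(\vecrho \mid \vecchi)$ always holds.

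For the direction \eqref{eq:mab-lemma-normal} $\Rightarrow$ \eqref{eq:mab-lemma-mab}, the modular condition is immediate since the zero vector is right-divisible by $x^d$. It remains to upgrade $\rdeg_{\w}(\vecrho) < D$ to $\rdeg_{\v}(\vecrho \mid \vecchi) < D$, i.e.\ to bound the entries of $\vecchi$. From the exact $j$-th equation $\chi_j G_j = \rho_j - \ESP R_j$ and the two degree inputs, one gets $\deg(\chi_j) + n = \deg(\chi_j G_j) \le \max(\deg \rho_j,\, \deg \ESP + \deg R_j)$; since $\deg\rho_j < D$ and $\deg \ESP < D - (k-1)$ are read off from $\rdeg_{\w}(\vecrho) < D$ and the shift $\w = (\0_{\intOrder}, k-1)$, this forces $\deg(\chi_j) < D$ for every $j$, and the claimed bound on $\rdeg_{\v}$ follows.

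For the converse \eqref{eq:mab-lemma-mab} $\Rightarrow$ \eqref{eq:mab-lemma-normal}, the bound $\rdeg_{\w}(\vecrho) \le \rdeg_{\v}(\vecrho \mid \vecchi) < D$ is free from the shift relation, so the real work is to show that $(\vecrho \mid \vecchi)\cdot\W \equiv \0 \modr x^d$ in fact holds with equality. The key step is to bound the degree of each product entry: using $\deg\rho_j < D$, $\deg\ESP < D-(k-1)$, and $\deg\chi_j < D$ (all from $\rdeg_{\v}(\vecrho \mid \vecchi) < D$), together with $\deg G_j = n$ and $\deg R_j \le n-1$, every entry $\chi_j G_j + \ESP R_j - \rho_j$ has degree at most $D + n - 1$. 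Since $d = \degConstraint + n$, this degree is strictly below $d$. An entry that is simultaneously $\equiv 0 \modr x^d$ (no monomials of degree $< d$) and of degree $< d$ (no monomials of degree $\ge d$) must vanish, so $(\vecrho \mid \vecchi)\cdot\W = \0$, which is \eqref{eq:mab-lemma-normal}.

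The main obstacle is the careful degree bookkeeping under the two different shifts, and the recognition that the order $d = \degConstraint + n$ is calibrated precisely so that the dominant term $\deg(\chi_j G_j) \le (D-1) + n = d - 1$ stays below $d$; this single inequality is what converts the approximant (modular) condition back into the exact kernel condition. Notably, the argument never uses the concrete value $D = \tmax$, so the equivalence in fact holds for any $D$ as long as $d = D + n$, and applies verbatim to an arbitrary vector with this block partition — which is exactly what makes the lemma usable inside the approximant-basis algorithm. Everything else is a routine degree count that is legitimate because $\SkewPolyringZeroDer$ is an integral domain in which degrees add under multiplication; I would only take care to state the truncation interpretation of $\modr x^d$ explicitly, as it is the one place where the monomial structure of $x^d$ (rather than a general modulus) is used.
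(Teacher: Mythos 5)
Your proof is correct and takes essentially the same route as the paper's: the forward modular condition is immediate, and the substance of both directions is the degree bookkeeping on the $\chi_j$ together with the observation that an entry of $(\vecrho \mid \vecchi) \cdot \W$ that is right-divisible by $x^d$ and has degree below $d = D + n$ must vanish. The one (minor but favorable) difference is that you derive every degree bound from the stated hypotheses $\rdeg_{\w}(\vecrho) < D$ and $\rdeg_{\v}(\vecrho \mid \vecchi) < D$, whereas the paper's proof invokes the setting-specific bounds $\deg(\ESP) \leq t$ and $\deg(f_j) \leq k-1$; your variant therefore holds verbatim for arbitrary vectors with this block structure, which is exactly what is needed when the approximant-basis algorithm extracts the minimal row $\bmin$ of $\B$.
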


\begin{proof}
	We start with showing that~\eqref{eq:mab-lemma-normal} implies~\eqref{eq:mab-lemma-mab}.
	The left-hand side of~\eqref{eq:mab-lemma-mab} clearly follows from~\eqref{eq:mab-lemma-normal} and it remains to show that $\deg(\chi_j) < D$ holds for all $j = 1, \dots, \intOrder$.
	With~\eqref{eq:key_equation_gao_hilrs_non_mod}, we get
	\begin{align}
	    \deg(\chi_j) &\leq \max\{ \deg(\ESP \cdot f_j), \deg(\ESP \cdot R_j) \} - \deg(G_j)
		\\
		&\leq \max\{ t + k - 1, t + n - 1 \} - n
		< t \leq \tmax = D.
	\end{align}

	For the other implication, note that the right-hand side of~\eqref{eq:mab-lemma-mab} directly implies the right-hand side of~\eqref{eq:mab-lemma-normal}.
	In order to see that the left-hand side of~\eqref{eq:mab-lemma-normal} holds, we show that all entries of the vector $(\vecrho \mid \vecchi) \cdot \W$ have degree less than $d$.
	With the help of the right-hand side of~\eqref{eq:mab-lemma-mab} and~\eqref{eq:key_equation_gao_hilrs_non_mod}, we obtain:
	\begin{itemize}
		\item $\deg(\ESP \cdot f_j) < D < d$,
		\item $\deg(\ESP \cdot R_j) \leq \deg(\ESP) + \deg(R_j) \leq t + n - 1 = D + n - 1 < d$,
		\item $\deg(\chi_j \cdot G_j) < t + n = D + n = d$.
	\end{itemize}
	\qed
\end{proof}

Hence, we can solve the Gao-like key equation~\eqref{eq:key_equation_gao_hilrs} by computing a left $\v$-ordered weak-Popov approximant basis $\B$ of $\W$.
This can be accomplished by~\cite[Algorithm~5]{bartz2021orderBases} requiring $\OComplTilde{\OMul{n}} \subseteq \OComplTilde{n^{\min\left\{\frac{\omega+1}{2},1.635\right\}}} \subseteq \OComplTilde{n^{1.635}}$ operations in $\Fqm$.

We then obtain candidates $p_j$ for the products $\ESP \cdot f_j$ for each $j = 1, \dots, \intOrder$ and a candidate $\tilde{\ESP}$ for the $\ESP$ by choosing the row $\bmin$ of $\B$ having minimal $\v$-weighted degree.
This choice makes sure to satisfy the degree constraint in~\eqref{eq:mab-lemma-mab} to get a proper solution as described in~\autoref{lem:mab-conversion}.
The subroutine for solving the Gao-like key equation via the presented minimal-approximant-bases approach is summarized in~\autoref{alg:key_equation_min_approx_basis}.

\begin{algorithm}
	\caption{Subroutine $\keyeqapprox(\cdot)$ for Solving the Gao-like Key Equation via a Minimal Approximant Basis}
	\label{alg:key_equation_min_approx_basis}

	\Input{$R_1, \dots, R_{\intOrder}, G_1, \dots, G_{\intOrder}, n, k, \intOrder$}
	\Output{$p_1, \dots, p_{\intOrder}, \tilde{\ESP}$}

	$\v \defeq (\0_{\intOrder}, k - 1, \0_{\intOrder})$\;

	$\degConstraint \defeq \frac{\intOrder}{\intOrder + 1} (n - k)$ and $d \defeq \degConstraint + n$\;

	$\W \defeq
	\begin{pmatrix}
		-\I_{\intOrder}
		\\
		\R
		\\
		\G
	\end{pmatrix}
	\in \SkewPolyringZeroDer^{(2 \intOrder + 1) \times \intOrder}$ with $\R \defeq (R_1, \dots, R_{\intOrder})$ and $\G \defeq \diag(G_1, \dots, G_{\intOrder})$\;

	\tcc{left $\v$-ordered weak Popov approximant basis of $\W$ of order $d$}
	$\B \defeq \lapproxbasis(d, \W, \v) \in \SkewPolyringZeroDer^{(2 \intOrder + 1) \times (2 \intOrder + 1)}$\;

	Define $\b_{\min} = (b_{\min, 1}, \dots, b_{\min, 2 \intOrder + 1})$ as the minimal row of $\B$ with respect to the $\v$-weighted degree\;

	\Return $b_{\min, 1}, \dots, b_{\min, \intOrder}, b_{\min, \intOrder + 1}$
\end{algorithm}

\begin{theorem}
    \autoref{alg:key_equation_min_approx_basis} solves the Gao-like key equation~\eqref{eq:key_equation_gao_hilrs} in $\OComplTilde{\intOrder^{\omega} n^{1.635}} \subseteq \OComplTilde{\intOrder^{2.373} n^{1.635}}$ $\Fqm$-operations.
\end{theorem}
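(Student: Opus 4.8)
The plan is to prove two things: that \autoref{alg:key_equation_min_approx_basis} returns a valid solution of the Gao-like key equation, and that it runs within the claimed complexity. Correctness will rest on \autoref{lem:mab-conversion} combined with the minimality property of $\v$-ordered weak-Popov bases, while the complexity will follow from the cost of the approximant-basis routine \cite[Algorithm~5]{bartz2021orderBases} applied to a matrix whose row dimension is $O(\intOrder)$ and whose order is $O(n)$. The guiding idea is that the designed solution is one particular approximant, and the minimal row of the basis can only be "shorter," so it too satisfies the degree constraint that \autoref{lem:mab-conversion} needs to turn an order-$d$ approximation into an exact identity.

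For correctness I would first invoke the forward direction of \autoref{lem:mab-conversion}: the designed vector $(\vecrho \mid \vecchi)$ with $\vecrho = (\ESP \cdot f_1, \dots, \ESP \cdot f_{\intOrder}, \ESP)$ and $\vecchi = (\chi_1, \dots, \chi_{\intOrder})$ is a nonzero left approximant of $\W$ of order $d$ with $\rdeg_{\v}(\vecrho \mid \vecchi) < \degConstraint$. Since the rows of the basis $\B$ from line~4 span the module of all order-$d$ left approximants of $\W$ and $\B$ is in $\v$-ordered row weak-Popov form, the predictable-degree property of such bases guarantees that the minimal row $\bmin$ attains the minimum $\v$-row degree over the whole module, whence
\[
  \rdeg_{\v}(\bmin) \leq \rdeg_{\v}(\vecrho \mid \vecchi) < \degConstraint .
\]
Writing $\bmin = (b_1, \dots, b_{2\intOrder + 1})$ and using $\v = (\0_{\intOrder}, k-1, \0_{\intOrder})$, this bound forces $\deg(b_j) < \degConstraint$ for $j \ne \intOrder + 1$ and $\deg(b_{\intOrder + 1}) < \degConstraint - k + 1$. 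Re-running the degree bookkeeping from the backward direction of \autoref{lem:mab-conversion} for $\bmin$ in place of the designed solution — with $\deg(R_j) \leq n - 1$ and $\deg(G_j) \leq n$ — shows that every entry of $\bmin \W$ has degree strictly below $d = \degConstraint + n$. As $\bmin$ is a row of the approximant basis we also have $\bmin \W \equiv \0 \modr x^d$, and a polynomial of degree less than $d$ that vanishes modulo $x^d$ is identically zero; hence $\bmin \W = \0$ exactly. Therefore $(b_1, \dots, b_{\intOrder}, b_{\intOrder + 1})$ solves~\eqref{eq:key_equation_gao_hilrs_non_mod}, which is precisely the output $(p_1, \dots, p_{\intOrder}, \tilde{\ESP})$.

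For the complexity I would argue that line~4 dominates. The matrix $\W \in \SkewPolyringZeroDer^{(2\intOrder + 1) \times \intOrder}$ has $O(\intOrder)$ rows, entries of degree $O(n)$, and is processed at order $d = \degConstraint + n \in O(n)$. Executing \cite[Algorithm~5]{bartz2021orderBases} on such an input costs $\OComplTilde{\intOrder^{\omega} \OMul{n}}$ operations in $\Fqm$, where the factor $\intOrder^{\omega}$ stems from the multiplications of $O(\intOrder)$-dimensional skew-polynomial matrices in the divide-and-conquer recursion and $\OMul{n}$ from the degree-$O(n)$ arithmetic. Since $\OMul{n} \subseteq \OCompl{n^{\min\{(\omega+1)/2,\,1.635\}}} \subseteq \OCompl{n^{1.635}}$, the total is $\OComplTilde{\intOrder^{\omega} n^{1.635}} \subseteq \OComplTilde{\intOrder^{2.373} n^{1.635}}$. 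Constructing $\W$ in lines~1--3 and selecting $\bmin$ in line~5 are negligible by comparison.

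I expect the main obstacle to be the correctness step rather than the complexity bound. The delicate point is that the approximant basis only certifies $\bmin \W \equiv \0 \modr x^d$, while the key equation demands exact equality; bridging this gap requires re-deriving the degree estimates of \autoref{lem:mab-conversion} for an \emph{arbitrary} minimal approximant, relying solely on $\rdeg_{\v}(\bmin) < \degConstraint$ and the block structure of $\W$ rather than on the algebraic form of the designed solution. On the complexity side, the only care needed is to isolate the correct $\intOrder^{\omega}$ dependence arising from the row dimension of $\W$ in the cited routine.
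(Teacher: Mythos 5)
Your proposal is correct and takes essentially the same route as the paper: the complexity argument is exactly the paper's proof (line~4 dominates, and the cost of \cite[Algorithm~5]{bartz2021orderBases} gives $\OComplTilde{\intOrder^{\omega} n^{1.635}}$), while your correctness part rigorously fills in what the paper handles informally in the text surrounding \autoref{lem:mab-conversion}, namely that the minimal-$\rdeg_{\v}$ row of the weak-Popov basis inherits the degree constraint and hence turns the order-$d$ congruence into an exact identity. No gaps.
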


\begin{proof}
    The complexity of~\autoref{alg:key_equation_min_approx_basis} is dominated by finding a minimal approximant basis in line 4.
	This can be achieved using~\cite[Algorithm~5]{bartz2021orderBases} whose complexity is $\OComplTilde{\intOrder^{\omega} n^{1.635}} \subseteq \OComplTilde{\intOrder^{2.373} n^{1.635}}$~\cite[Theorem~11]{bartz2021orderBases}.
\end{proof}

This directly implies the following complexity improvement for~\autoref{thm:gao-like_decoder}:

\begin{corollary}
	\label{cor:fast-gao}
    When the Gao-like key equation~\eqref{eq:key_equation_gao_hilrs} is solved by~\autoref{alg:key_equation_min_approx_basis}, the complexity of the Gao-like decoder from~\autoref{alg:decoder} decreases to $\OComplTilde{\intOrder^{\omega} n^{1.635}} \subseteq \OComplTilde{\intOrder^{2.373} n^{1.635}}$ operations in $\Fqm$.
\end{corollary}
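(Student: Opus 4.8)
The plan is to reuse the three-part cost decomposition established in the proof of \autoref{thm:gao-like_decoder} and to replace only the contribution coming from the key-equation step. Recall that the running time of \autoref{alg:decoder} was split there into: (i) the computation of the minimal polynomials $G_1, \dots, G_{\intOrder}$ and the interpolation polynomials $R_1, \dots, R_{\intOrder}$, costing $\OComplTilde{\intOrder \mathcal{M}_{q,m}(n)}$ operations in $\Fqm$; (ii) the solution of the Gao-like key equation~\eqref{eq:key_equation_gao_hilrs}; and (iii) the final for-loop of $\intOrder$ left divisions, again costing $\OComplTilde{\intOrder \mathcal{M}_{q,m}(n)}$. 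When \autoref{alg:key_equation_min_approx_basis} is used, only step (ii) is affected, so the whole argument reduces to re-estimating this single contribution and checking which term dominates.

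First I would note that the correctness analysis need not be redone. By \autoref{lem:mab-conversion}, the minimal-approximant-basis reformulation yields exactly the same products $\ESP \cdot f_j$ (read off from $b_{\min,1}, \dots, b_{\min,\intOrder}$) and the same candidate $\tilde{\ESP}$ for the error-span polynomial as the linear-algebra formulation~\eqref{eq:gao_linear_system}. Hence the output of \autoref{alg:decoder}, its decoding radius~\eqref{eq:dec-radius-cond}, and the failure-probability bound~\eqref{eq:failure-probability} are all unchanged; only the cost of step (ii) is replaced.

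Next I would substitute for (ii) the bound just established in the preceding theorem, namely that \autoref{alg:key_equation_min_approx_basis} solves the key equation in $\OComplTilde{\intOrder^{\omega} n^{1.635}}$ operations in $\Fqm$ (via~\cite[Algorithm~5, Theorem~11]{bartz2021orderBases}). It then remains to confirm that this term dominates (i) and (iii). Using $\OComplTilde{\mathcal{M}_{q,m}(n)} \subseteq \OComplTilde{n^{1.635}}$, both preprocessing and postprocessing run in $\OComplTilde{\intOrder n^{1.635}}$; since $\omega \geq 2$ we have $\intOrder \leq \intOrder^{\omega}$, so $\OComplTilde{\intOrder n^{1.635}} \subseteq \OComplTilde{\intOrder^{\omega} n^{1.635}}$. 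Summing the three contributions therefore gives an overall complexity of $\OComplTilde{\intOrder^{\omega} n^{1.635}} \subseteq \OComplTilde{\intOrder^{2.373} n^{1.635}}$, as claimed.

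The argument is essentially bookkeeping, so there is no genuine obstacle; the only point that requires care is verifying that the approximant-basis step is indeed the new bottleneck after the substitution — that is, that neither the polynomial preprocessing in (i) nor the left-division postprocessing in (iii) dominates — which is precisely the comparison $\intOrder \leq \intOrder^{\omega}$ used above.
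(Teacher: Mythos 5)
Your proposal is correct and follows essentially the same route as the paper: the paper presents this corollary as a direct consequence of the preceding theorem, i.e., it substitutes the $\OComplTilde{\intOrder^{\omega} n^{1.635}}$ cost of \autoref{alg:key_equation_min_approx_basis} into the cost decomposition from the proof of \autoref{thm:gao-like_decoder}, exactly as you do. Your explicit check that the preprocessing and left-division steps in $\OComplTilde{\intOrder \mathcal{M}_{q,m}(n)}$ are dominated (since $\intOrder \leq \intOrder^{\omega}$) is just the bookkeeping the paper leaves implicit.
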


With~\autoref{cor:fast-gao}, the Gao-like decoder is the fastest known decoder for \ac{HILRS} codes in the sum-rank metric as well as for horizontally interleaved Gabidulin codes in the rank metric.
Its complexity is essentially subquadratic in the component-code length $n$, as the interleaving order $s$ is usually much smaller than the code length $n$.
Remark in particular that the gain in the error-correcting capacity increases fast for increasing $\intOrder$, as $\frac{\intOrder}{\intOrder + 1}$ quickly tends to one.

\section{Conclusion}
\label{sec:conclusion}

We studied \ac{HILRS} codes and their fast decoding which has promising potential applications in code-based cryptography.
As a starting point, we presented a Gao-like decoder that features probabilistic unique decoding for an error of sum-rank weight at most $\frac{\intOrder}{\intOrder + 1} (n - k)$, where $\intOrder$ is the interleaving order, and $n$ and $k$ are the length and the dimension of the component codes.
We gave a bound on the failure probability and achieved a complexity of $\OComplTilde{(\intOrder n)^{2.373}}$ operations in $\Fqm$ by solving the Gao-like key equation conventionally via Gaussian elimination.

Techniques from the area of minimal approximant bases allowed us to speed up the decoder significantly and obtain a complexity of $\OComplTilde{\intOrder^{2.373} n^{1.635}}$ operations in $\Fqm$.
Under the reasonable assumption that the interleaving order $\intOrder$ is small compared to the component-code length $n$, this is subquadratic.
Overall, this results in the fastest known decoders for both \ac{HILRS} codes in the sum-rank metric and for horizontally interleaved Gabidulin codes in the rank metric.
\\

Further work can include the generalization of the presented decoder to the error-erasure case.
Next to errors, this error model includes row and column erasures, for which either the row space or the column space is known.
Moreover, other techniques could give bounds on the failure probability for nonzero derivations or yield tighter ones for the zero-derivation setting.

\bibliographystyle{splncs04}
\bibliography{references}

\begin{thebibliography}{10}
\providecommand{\url}[1]{\texttt{#1}}
\providecommand{\urlprefix}{URL }
\providecommand{\doi}[1]{https://doi.org/#1}

\bibitem{hqc-round4}
{Aguilar Melchor}, C., Aragon, N., Bettaieb, S., Bidoux, L., Blazy, O., Bos,
  J., Deneuville, J.C., Dion, A., Gaborit, P., Lacan, J., Persichetti, E.,
  Robert, J.M., Véron, P., Zémor, G.: {Hamming Quasi-Cyclic (HQC)} (2023),
  \url{http://pqc-hqc.org/download.php?file=hqc-specification_2023-04-30.pdf}

\bibitem{aguilar2022lrpcMultipleSyndromes}
Aguilar-Melchor, C., Aragon, N., Dyseryn, V., Gaborit, P., Z{\'e}mor, G.: {LRPC
  Codes With Multiple Syndromes: Near Ideal-Size KEMs Without Ideals}. In:
  Post-Quantum Cryptography. pp. 45--68 (2022)

\bibitem{nist-round3}
Alagic, G., Apon, D., Cooper, D., Dang, Q., Dang, T., Kelsey, J., Lichtinger,
  J., Liu, Y.K., Miller, C., Moody, D., Peralta, R., Perlner, R., Robinson, A.,
  Smith-Tone, D.: {Status Report on the Third Round of the NIST Post-Quantum
  Cryptography Standardization Process} (2022). \doi{10.6028/NIST.IR.8413-upd1}

\bibitem{bike-round4}
Aragon, N., Barreto, P.S.L.M., Bettaieb, S., Bidoux, L., Blazy, O., Deneuville,
  J.C., Gaborit, P., Ghosh, S., Gueron, S., Güneysu, T., {Aguilar Melchor},
  C., Misoczki, R., Persichetti, E., Richter-Brockmann, J., Sendrier, N.,
  Tillich, J.P., Vasseur, V., Zémor, G.: {BIKE: Bit Flipping Key
  Encapsulation} (2022),
  \url{https://bikesuite.org/files/v5.0/BIKE_Spec.2022.10.10.1.pdf}

\bibitem{aragon2019durandal}
Aragon, N., Blazy, O., Gaborit, P., Hauteville, A., Z{\'e}mor, G.: {Durandal: A
  Rank Metric Based Signature Scheme}. In: Advances in Cryptology -- EUROCRYPT
  2019. pp. 728--758 (2019)

\bibitem{aragon2022lowms}
Aragon, N., Dyseryn, V., Gaborit, P., Loidreau, P., Renner, J., Wachter-Zeh,
  A.: {LowMS: A New Rank Metric Code-Based KEM Without Ideal Structure}.
  Cryptology ePrint Archive, Paper 2022/1596 (2022)

\bibitem{Aragon2018-DecAttack}
Aragon, N., Gaborit, P., Hauteville, A., Tillich, J.P.: {A New Algorithm for
  Solving the Rank Syndrome Decoding Problem}. In: IEEE International Symposium
  on Information Theory (ISIT). pp. 2421--2425 (2018)

\bibitem{crystals-kyber-round3}
Avanzi, R., Bos, J., Ducas, L., Kiltz, E., Lepoint, T., Lyubashevsky, V.,
  Schanck, J.M., Schwabe, P., Seiler, G., Stehl{\'e}, D.: {CRYSTALS-Kyber:
  Algorithm Specifications and Supporting Documentation (Version 3.02)} (2021),
  \url{https://pq-crystals.org/kyber/data/kyber-specification-round3-20210804.pdf}

\bibitem{sike-round4}
Azarderakhsh, R., Campagna, M., Costello, C., {De Feo}, L., Hess, B.,
  Hutchinson, A., Jalali, A., Karabina, K., Koziel, B., LaMacchia, B., Longa,
  P., Naehrig, M., Pereira, G., Renes, J., Soukharev, V., Urbanik, D.:
  {Supersingular Isogeny Key Encapsulation} (2022),
  \url{https://sike.org/files/SIDH-spec.pdf}

\bibitem{bardet2021rsl}
Bardet, M., Briaud, P.: {An Algebraic Approach to the Rank Support Learning
  Problem}. In: Post-Quantum Cryptography. pp. 442--462 (2021)

\bibitem{bartz2021orderBases}
Bartz, H., Jerkovits, T., Puchinger, S., Rosenkilde, J.: {Fast Decoding of
  Codes in the Rank, Subspace, and Sum-Rank Metric}. IEEE Transactions on
  Information Theory  \textbf{67}(8),  5026--5050 (2021)

\bibitem{bartz2021decoding}
Bartz, H., Puchinger, S.: {Decoding of Interleaved Linearized Reed-Solomon
  Codes with Applications to Network Coding}. In: IEEE International Symposium
  on Information Theory (ISIT). pp. 160--165 (2021)

\bibitem{bartz2023fast}
Bartz, H., Puchinger, S.: {Fast Decoding of Interleaved Linearized
  Reed--Solomon Codes and Variants}. submitted to: IEEE Transactions on
  Information Theory  (2023), available at https://arxiv.org/abs/2201.01339v3

\bibitem{classic-mceliece-round4}
Bernstein, D.J., Chou, T., Cid, C., Gilcher, J., Lange, T., Maram, V., von
  Maurich, I., Misoczki, R., Niederhagen, R., Persichetti, E., Peters, C.,
  Sendrier, N., Szefer, J., Tjhai, C.J., Tomlinson, M., Wang, W.: {Classic
  McEliece: Conservative Code-Based Cryptography: Cryptosystem Specification}
  (2022), \url{https://classic.mceliece.org/mceliece-spec-20221023.pdf}

\bibitem{Boucher-2020}
Boucher, D.: {An Algorithm for Decoding Skew Reed--Solomon Codes with respect
  to the Skew Metric}. Designs, Codes and Cryptography  \textbf{88}(9),
  1991--2005 (2020)

\bibitem{caruso2019residues}
Caruso, X.: {Residues of Skew Rational Functions and Linearized Goppa Codes}.
  arXiv preprint arXiv:1908.08430v1  (2019)

\bibitem{castryck2022sike-attack}
Castryck, W., Decru, T.: {An Efficient Key Recovery Attack on SIDH}. Cryptology
  ePrint Archive, Paper 2022/975 (2022)

\bibitem{gabidulin1991ideals}
Gabidulin, E.M., Paramonov, A., Tretjakov, O.: {Ideals over a Non-Commutative
  Ring and Their Application in Cryptology}. In: Workshop on the Theory and
  Application of of Cryptographic Techniques. pp. 482--489. Springer (1991)

\bibitem{gabidulin2009improving}
Gabidulin, E.M., Rashwan, H., Honary, B.: {On Improving Security of GPT
  Cryptosystems}. In: IEEE International Symposium on Information Theory. pp.
  1110--1114 (2009)

\bibitem{gao2003new}
Gao, S.: {A New Algorithm for Decoding Reed--Solomon Codes}. In:
  Communications, Information and Network Security, pp. 55--68. Springer (2003)

\bibitem{Hoermann_Bartz_et-al-2023}
Hörmann, F., Bartz, H., Horlemann, A.L.: {Distinguishing and Recovering
  Generalized Linearized Reed--Solomon Codes}. In: Code-Based Cryptography:
  CBCrypto 2022, pp. 1--20 (2023)

\bibitem{hoermann2022errorErasureISIT}
Hörmann, F., Bartz, H., Puchinger, S.: {Error-Erasure Decoding of Linearized
  Reed--Solomon Codes in the Sum-Rank Metric}. In: IEEE International Symposium
  on Information Theory (ISIT). pp. 7--12 (2022)

\bibitem{hoermann2023errorerasure}
Hörmann, F., Bartz, H., Puchinger, S.: {Syndrome-Based Error-Erasure Decoding
  of Interleaved Linearized Reed--Solomon Codes}. to be submitted to: IEEE
  Transactions on Information Theory  (2023)

\bibitem{jerkovits2023metzner}
Jerkovits, T., H{\"o}rmann, F., Bartz, H.: {On Decoding High-Order Interleaved
  Sum-Rank-Metric Codes}. In: Code-Based Cryptography: CBCrypto 2022. pp.
  90--109 (2023)

\bibitem{lam1988vandermonde}
Lam, T.Y., Leroy, A.: {Vandermonde and Wronskian Matrices over Division Rings}.
  Journal of Algebra  \textbf{119}(2),  308--336 (1988)

\bibitem{le2014powers}
Le~Gall, F.: {Powers of Tensors and Fast Matrix Multiplication}. In:
  Proceedings of the 39th International Symposium on Symbolic and Algebraic
  Computation. pp. 296--303 (2014)

\bibitem{Loidreau-GPT-ACCT2016}
Loidreau, P.: {An Evolution of {GPT} Cryptosystem}. In: International Workshop
  on Algebraic and Combinatorial Coding Theory (ACCT) (2016)

\bibitem{loidreau2010designing}
Loidreau, P.: {Designing a Rank Metric Based McEliece Cryptosystem}. In:
  International Workshop on Post-Quantum Cryptography. pp. 142--152 (2010)

\bibitem{martinez2018skew}
Mart{\'\i}nez-Pe{\~n}as, U.: {Skew and Linearized Reed--Solomon Codes and
  Maximum Sum Rank Distance Codes over any Division Ring}. Journal of Algebra
  \textbf{504},  587--612 (2018)

\bibitem{martinez2019reliable}
Mart{\'\i}nez-Pe{\~n}as, U., Kschischang, F.R.: {Reliable and Secure Multishot
  Network Coding using Linearized Reed-Solomon Codes}. IEEE Transactions on
  Information Theory  \textbf{65}(8),  4785--4803 (2019)

\bibitem{matsaglia1974}
Matsaglia, G., Styan, G.P.H.: {Equalities and Inequalities for Ranks of
  Matrices}. Linear and Multilinear Algebra  \textbf{2}(3),  269--292 (1974)

\bibitem{McEliece-1978}
McEliece, R.J.: {A Public-Key Cryptosystem Based On Algebraic Coding Theory}.
  The Deep Space Network Progress Report  \textbf{42-44},  114--116 (1978)

\bibitem{metznerkapturowski1990}
Metzner, J., Kapturowski, E.: {A General Decoding Technique Applicable to
  Replicated File Disagreement Location and Concatenated Code Decoding}. IEEE
  Transactions on Information Theory  \textbf{36}(4),  911--917 (1990)

\bibitem{Ore1933}
Ore, O.: {On a Special Class of Polynomials}. {Transactions of the American
  Mathematical Society}  \textbf{35}(3),  559--584 (1933)

\bibitem{ore1933theory}
Ore, O.: {Theory of Non-Commutative Polynomials}. Annals of Mathematics pp.
  480--508 (1933)

\bibitem{puchinger2017row}
Puchinger, S., Rosenkilde~n{\'e} Nielsen, J., Li, W., Sidorenko, V.: {Row
  Reduction Applied to Decoding of Rank-Metric and Subspace Codes}. Designs,
  Codes and Cryptography  \textbf{82}(1-2),  389--409 (2017)

\bibitem{puchinger2020generic}
Puchinger, S., Renner, J., Rosenkilde, J.: {Generic Decoding in the Sum-Rank
  Metric}. In: IEEE International Symposium on Information Theory (ISIT). pp.
  54--59 (2020)

\bibitem{puchinger2019decoding}
Puchinger, S., Renner, J., Wachter-Zeh, A.: {Decoding High-Order Interleaved
  Rank-Metric Codes}. arXiv preprint arXiv:1904.08774  (2019)

\bibitem{renner2019efficient}
Renner, J., Jerkovits, T., Bartz, H.: {Efficient Decoding of Interleaved
  Low-Rank Parity-Check Codes}. In: 2019 XVI International Symposium "Problems
  of Redundancy in Information and Control Systems"(REDUNDANCY). pp. 121--126
  (2019)

\bibitem{metznerkapturowskirank2021isit}
Renner, J., Puchinger, S., Wachter-Zeh, A.: {Decoding High-Order Interleaved
  Rank-Metric Codes}. In: IEEE International Symposium on Information Theory
  (ISIT). pp. 19--24 (2021)

\bibitem{sidorenko2010decoding}
Sidorenko, V., Bossert, M.: {Decoding Interleaved Gabidulin Codes and
  Multisequence Linearized Shift-Register Synthesis}. In: IEEE International
  Symposium on Information Theory. pp. 1148--1152 (2010)

\bibitem{sidorenko2011skew}
Sidorenko, V., Jiang, L., Bossert, M.: {Skew-Feedback Shift-Register Synthesis
  and Decoding Interleaved Gabidulin Codes}. IEEE Transactions on Information
  Theory  \textbf{57}(2),  621--632 (2011)

\bibitem{sage}
Stein, W.A., et~al.: {S}age {M}athematics {S}oftware ({V}ersion 9.6). The Sage
  Development Team (2022), {http://www.sagemath.org}

\bibitem{Storjohann2000}
{Storjohann, Arne}: {Algorithms for Matrix Canonical Forms}. Ph.D. thesis
  (2000)

\bibitem{wachter2013phd}
{Wachter-Zeh}, A.: Decoding of Block and Convolutional Codes in Rank Metric.
  Ph.D. thesis, Ulm University and University of Rennes 1, Ulm, Germany and
  Rennes, France (2013)

\bibitem{wachter2013fast}
Wachter-Zeh, A., Afanassiev, V., Sidorenko, V.: {Fast Decoding of Gabidulin
  Codes}. Designs, Codes and Cryptography  \textbf{66}(1),  57--73 (2013)

\end{thebibliography}

\end{document}